\documentclass[12pt]{iopart}
\usepackage{iopams} 
\usepackage{amstext}
\usepackage{amsthm}
\usepackage{dsfont}
\usepackage{hyperref}
\usepackage{enumitem}
\expandafter\let\csname equation*\endcsname=\relax
\expandafter\let\csname endequation*\endcsname=\relax
\usepackage{amsmath}
\usepackage{caption}
\usepackage{subcaption}



\usepackage{graphicx}

\newcommand{\diff}{{\rm d}}
\newcommand{\EE}{\mathbb{E}}
\newcommand{\torus}{\mathbb{T}_L^d}

\newcommand{\var}{\mathrm{var}}
\newcommand{\PP}{\mathbb{P}}
\newcommand{\NN}{\mathbb{N}}
\newcommand{\ZZ}{\mathbb{Z}}
\newcommand{\RR}{\mathbb{R}}
\newcommand{\posint}{\ZZ_{+}}
\newcommand{\naturals}{\NN}
\newcommand{\sA}{\mathcal{A}}
\newcommand{\sC}{\mathcal{C}}
\newcommand{\sK}{\mathcal{K}}
\newcommand{\sL}{\mathcal{L}}
\newcommand{\sN}{\mathcal{N}}
\newcommand{\sR}{\mathcal{R}}
\newcommand{\sS}{\mathcal{S}}
\newcommand{\sT}{\mathcal{T}}
\newcommand{\sW}{\mathcal{W}}
\newcommand{\sX}{\mathcal{X}}
\newcommand{\sZ}{\mathcal{Z}}
\newcommand{\dc}{d_{\rm c}}
\newcommand{\ind}{{\mathds{1}}}

\newcommand{\Phibar}{\bar{\Phi}}

\newtheorem{theorem}{Theorem}[section]
\newtheorem{lemma}[theorem]{Lemma}

\newtheorem{proposition}[theorem]{Proposition}

\newcommand{\dt}{\diff t}

\begin{document} 

\title{Unwrapped two-point functions on high-dimensional tori}

\bigskip \begin{center}In memory of Norman E. Frankel\end{center}

\author{Youjin Deng$^{1,2}$, Timothy M. Garoni$^3$, Jens Grimm$^3$, Zongzheng Zhou$^3$}
\address{$^1$Department of Modern Physics, University of Science and
  Technology of China, Hefei 230027, China}
\address{$^2$MinJiang Collaborative Center for Theoretical Physics,
College of Physics and Electronic Information Engineering, Minjiang University, Fuzhou 350108, China}
\address{$^3$ARC Centre of Excellence for Mathematical and Statistical Frontiers (ACEMS), School of Mathematics, Monash University, Clayton,
  Victoria 3800, Australia}
\ead{\mailto{tim.garoni@monash.edu}, \mailto{eric.zhou@monash.edu}, \mailto{yjdeng@ustc.edu.cn}}
  
\begin{abstract}
  We study \textit{unwrapped} two-point functions for the Ising model, the self-avoiding walk and a random-length loop-erased random walk on
  high-dimensional lattices with periodic boundary conditions. While the standard two-point functions of these models have been observed to
  display an anomalous plateau behaviour, the unwrapped two-point functions are shown to display standard mean-field behaviour. 
  Moreover, we argue that the asymptotic behaviour of these unwrapped two-point functions on the torus can be
  understood in terms of the standard two-point function of a random-length random walk model on $\ZZ^d$. A precise description is derived
  for the asymptotic behaviour of the latter. Finally, we consider a natural notion of the Ising \emph{walk length}, and show numerically that
  the Ising and SAW walk lengths on high-dimensional tori show the same universal behaviour known for the SAW walk length on the complete graph.
\end{abstract}

\noindent{\it Keywords}: Upper critical dimension, Finite-size scaling, Ising model, Self-avoiding walk, two-point function

\section{Introduction}
It is well known~\cite{FernandezFrohlichSokal1992} that models of critical phenomena typically possess an upper critical dimension $\dc$,
such that in dimensions $d>\dc$, their thermodynamic behaviour is governed by critical exponents taking simple mean-field values. In contrast
to the simplicity of the thermodynamic behaviour, however, the theory of finite-size scaling in dimensions above $\dc$ is surprisingly
subtle, and has been the subject of considerable debate; see
e.g.~\cite{LundowMarkstrom2014,WittmannYoung2014,LundowMarkstrom2016,FloresSola2016,GrimmElciZhouGaroniDeng2017,ZhouGrimmFangDengGaroni2018}. 

In particular, it has been observed that when $d>\dc$ the finite-size scaling of a number of fundamental quantities depends strongly on the boundary
conditions imposed. For example, for the Ising model and the self-avoiding walk at their infinite-volume critical points, it has been
numerically observed that on a box of linear size $L$ with free boundary conditions, the two-point function and susceptibility display the
expected mean-field behaviour, $g(x) \approx \| x\|^{2-d}$~\cite{ZhouGrimmFangDengGaroni2018} and $\chi \approx
L^2$~\cite{LundowMarkstrom2014,LundowMarkstrom2016,ZhouGrimmFangDengGaroni2018}, respectively. These observations have recently been verified
rigorously in the Ising case~\cite{CamiaJiangNewman2021}. By contrast, if periodic boundary conditions are imposed, i.e. the model is
defined on a discrete torus, then 
simulations~\cite{Binder1985,GrimmElciZhouGaroniDeng2017,ZhouGrimmFangDengGaroni2018} suggest the anomalous behaviour $g(x) \approx c_1 \| x
\|^{2-d} + c_2 L^{-d/2}$ and $\chi \approx L^{d/2}$ holds, as predicted for the Ising case in~\cite{Papathanakos2006}.
This so-called \emph{plateau} behaviour of the two-point function has recently been established rigorously~\cite{SladeWSAW2020} for the Domb-Joyce model with
$d>4$, for sufficiently weak interaction strength, and also for bond percolation~\cite{HutchcroftMichtaSlade2021}
when $d\ge11$ for the nearest-neighbour model, and $d>6$ for spread-out models.

In this article, we will focus solely on the case of periodic boundary conditions.
It was argued heuristically and observed numerically in~\cite{GrimmElciZhouGaroniDeng2017} that the expected number of windings of a SAW on
a torus of dimension $d>\dc$ should scale like $L^{d/\dc-1}$. This implies that there is a proliferation of windings when $d>\dc$.
Analogous behaviour has recently been established rigorously for bond percolation; indeed, it was proved in~\cite{HeydenreichHofstad2017} that, with
high probability, large clusters contain long cycles which wind the torus at least $L^{d/\dc-1}$ times.

In an effort to understand the plateau behaviour of the SAW/Ising torus two-point function, 
it was argued in~\cite{GrimmElciZhouGaroniDeng2017} that if one considers an alternative \emph{unwrapped} two-point function, 
which correctly accounts for the proliferation of windings, then the
standard mean-field behaviour is recovered in the bulk. Strong numerical evidence in support of this claim was presented for the case of
SAW. The unwrapping procedure described in~\cite{GrimmElciZhouGaroniDeng2017} was formulated in the language of walk models however, and no
analogous construction was provided for the Ising model. One contribution of the current article is to consider a natural
walk model associated with the Ising model~\cite{Aizenman1986,FernandezFrohlichSokal1992}, and use it to define an unwrapped analogue
of the Ising two-point function. As described below, this unwrapped two-point function displays the same asymptotic behaviour as in the SAW case.

In fact, by studying the random-length random walk (RLRW) introduced in~\cite{ZhouGrimmFangDengGaroni2018}, we make a rather more detailed
prediction for the behaviour of the Ising/SAW unwrapped two-point function than discussed in~\cite{GrimmElciZhouGaroniDeng2017}. Specifically, we
provide a concrete conjecture for its universal behaviour on the scale of the \emph{unwrapped length}, $L^{d/\dc}$. Strong numerical evidence,
provided by Monte Carlo simulation, is then provided in support of this conjecture. In addition to the Ising and SAW cases, we also present
numerical results for a loop-erased analogue of the RLRW.

The motivation for considering the random-length random walk model is easily understood.
In sufficiently high dimensions, it is known rigorously that, on $\mathbb{Z}^d$, the
Ising~\cite{Sakai2007}, SAW~\cite{Hara2008} and loop-erased random walk (LERW)\cite{LawlerLimic2010} two-point functions exhibit the same
scaling behaviour as the two-point function of a Simple Random Walk (SRW). 
Since the length of a SAW on the torus is necessarily finite, however, in order for SRW to accurately model SAW on the
torus it must be truncated to a finite length, denoted $\sN$. The resulting model is precisely the RLRW discussed
in~\cite{ZhouGrimmFangDengGaroni2018}. We note that in the special case in which $\sN$ is geometrically distributed, the two-point function
of RLRW on $\ZZ^d$ corresponds to the lattice Green function, which is very well studied; see~\cite{MichtaSladeLGF2021} and references
therein. In order to understand walk models on high dimensional tori, however, we will consider the case in which $\sN$ more closely mimics
the length of a corresponding SAW or Ising walk. 

This provides a motivation for studying the universal behaviour of the SAW and Ising walk length.
It has been proved that the expected walk length of critical SAW scales like the square root of the volume
both on the complete graph~\cite{Yadin2016}, and on the hypercube~\cite{Slade2021}. Universality would then suggest that the same behaviour
should hold for the critical SAW and Ising models on high-dimensional tori. While this remains an
open question, the analogous statement has recently been proved~\cite{MichtaSlade2021} for the Domb-Joyce model when $d>\dc$, provided the
interaction strength is sufficiently small.
Our simulations strongly suggest that the mean of the critical SAW and Ising walk lengths on high-dimensional tori do indeed scale as
$L^{d/2}$. Moreover, these simulations also suggest that the variance and standardised distribution function of the walk length of
the critical SAW and Ising models display the same universal behaviour known~\cite{DengGaroniGrimmNasrawiZhou2019,SladeKn2020} to hold for
SAW on the complete graph.

\subsection{Outline}
The outline of the remainder of this article is as follows. In Section~\ref{subsec:IsingWalks} we recall the definition of the Ising walk
introduced by Aizenman~\cite{Aizenman1982,Aizenman1985}, which holds on arbitrary graphs. Section~\ref{subsec:unwrapping and winding}
then provides a precise definition of the unwrapped two-point function for a general class of walk models defined on the discrete
torus. Section~\ref{subsec:SAW and Ising walk distributions} describes the specific SAW and Ising distributions that we consider on the
torus, and explains our method of simulating them. Section~\ref{subsec:RLRW definitions} recalls the relevant definitions for the RLRW and a
corresponding loop-erased analogue, while Section~\ref{subsec:numerical details} summarises the choices of parameters used in our simulations.
Section~\ref{sec:results} describes our results. Section~\ref{sec:walk length distribution} presents our numerical results for the SAW and
Ising walk lengths, and Section~\ref{subsec:proliferation of windings} presents numerical results for the number of
windings. Section~\ref{subsec:theorem} presents a general theorem on the two-point function of RLRW on $\ZZ^d$, and then utilises it to
predict the universal behaviour of the unwrapped two-point function of the SAW and Ising models on high-dimensional tori. These predictions
are then compared with the results from simulations. Section~\ref{sec:proofs} provides a proof for the proposition presented in
Section~\ref{subsec:theorem}. Finally, in the appendix we derive some identities for the two-point functions of RLRW and its loop-erased
analogue that were discussed in Section~\ref{subsec:RLRW definitions}.

\section{Models and observables} 
\label{sec:models and observables}

\subsection{Ising Walks}
\label{subsec:IsingWalks}
The zero-field ferromagnetic Ising model on finite graph $G=(V,E)$ at inverse temperature $\beta\ge0$ is defined by the measure
\begin{equation}
  \PP(\sigma) \propto \exp\left(\beta \sum_{ij\in E} \sigma_i\sigma_j\right), \qquad \sigma\in \{-1,1\}^V.
  \label{Ising measure}
\end{equation}
In this section, we briefly discuss a method due to Aizenman~\cite{Aizenman1982,Aizenman1985} for expressing the Ising two-point function
in terms of a particular random walk model.

We assume that $G$ is rooted, with root $0\in V$. For $v\in V\setminus 0$, let $\sC_v$ denote the set of all
$A\subseteq E$ such that the set of all vertices of odd degree in $(V,A)$ is precisely $\{0,v\}$, and let $\sC_0$ denote the set of all
$A\subseteq E$ such that $(V,A)$ has no vertices of odd degree. For a family of edge sets $S\subseteq 2^E$, let
\begin{equation}
  \lambda(S):=\sum_{A\in S} [\tanh(\beta)]^{|A|}.
\end{equation}
The high-temperature expansion for the Ising model (see e.g. \cite[(3.5)]{Aizenman1985} or~\cite[Lemma
  2.1]{CollevecchioGaroniHyndmanTokarev16}) implies that for all $v\in V$ we have
\begin{equation}
  \EE (\sigma_0\sigma_v) = \frac{\lambda(\sC_{v})}{\lambda(\sC_0)}.
  \label{eq:Ising high temperature}
\end{equation}
The expectation in~\eqref{eq:Ising high temperature} is with respect to the Ising measure~\eqref{Ising measure}.

Now, for $n\in\naturals$ let\footnote{Here, and in what follows, $\naturals$ denotes the set of non-negative integers, while $\posint$
  denotes the set of strictly positive integers.} $\Omega_G^n$ denote the set of all $n$-step walks on rooted graph $G=(V,E)$ which start at
the root $0$; i.e. all sequences $\omega_0,\ldots,\omega_n$ such that $\omega_i\in V$, $\omega_0=0$ and $\omega_i\,\omega_{i+1}\in E$.
We set $\Omega_G:=\bigcup_{n\in\naturals}\Omega_{G}^n$.
For $\omega\in \Omega_{G}^n$, the notation $\omega:0\to v$ implies $\omega_n=v$, and
we denote the end of $\omega$ by $e(\omega)=\omega_n$.
In all that follows, we let $|\omega|$ denote the number of steps, or \emph{length}, of the walk $\omega\in\Omega_G$, so that 
\begin{equation}
  |\omega|=n \text{ iff $\omega\in \Omega_{G}^n$}. \label{eq:walk length definition}
\end{equation}

Now fix an (arbitrary) ordering, $\prec$, of $V$. We define $\mathcal{T}:\cup_{v\in V}\sC_{v}\to\Omega_G$ as follows.
If $A\in \sC_0$, then $\sT(A)=0$. If $A \in \sC_{v}$ with $v\neq 0$, we recursively define
the walk $\sT(A) = v_0v_1...v_k$ from $v_0 = 0$ to $v_k = v$, such that from $v_i$ we choose $v_{i+1}$ to be the
smallest neighbour of $v_i$ such that $v_iv_{i+1} \in A$ and $v_i v_{i+1}$ has not previously been traversed by the walk. 
It is clear that $\mathcal{T}(A)$ defines an edge self-avoiding trail from $0$ to $v$.
An illustration of the construction is shown in~\Fref{fig:cartoon_ising_walk}.
\begin{figure}
\centering
\includegraphics[scale=0.4]{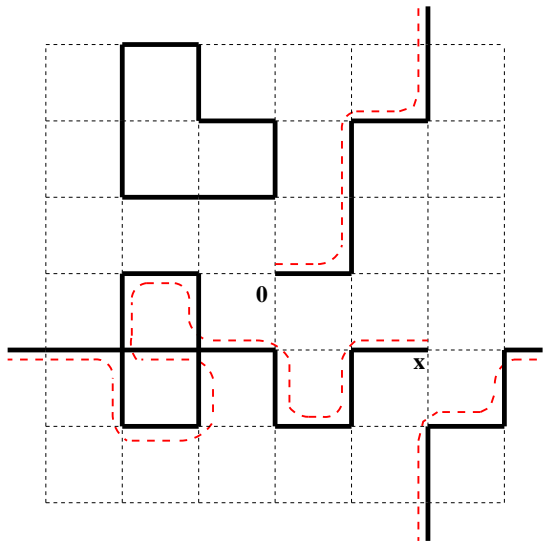}
\caption{Illustration of an Ising high-temperature graph and its corresponding Ising walk. The underlying graph, $G$, is
  the discrete two-dimensional torus with $L=7$, with the natural lexicographic order imposed on the vertices.
  The solid black lines denote a high-temperature edge configuration $A\in\sC_{x}$, while the red dashed line denotes $\sT(A)$. 
  The walk length is $|\sT|=23$.}
\label{fig:cartoon_ising_walk}
\end{figure}

Partitioning $\Omega_G$ in terms of $\sT$ we can write, for any $v\in V$,
\begin{align}
    \EE(\sigma_0\sigma_v) &= \sum_{\omega\in \Omega_G} \sum_{\substack{A\in\sC_{v} \\ \sT(A)=\omega}}
    \frac{[\tanh(\beta)]^{|A|}}{\lambda(\sC_0)}\\
    &= \sum_{\substack{\omega\in \Omega_G \\ \omega:0\to v}} \rho(\omega)\label{eq:Ising correlation via walk}
\end{align}
where $\rho:\Omega_G\to[0,\infty)$ is defined by 
\begin{equation}
  \rho(\omega):= \frac{\lambda(\sT^{-1}(\omega))}{\lambda(\sC_0)}.
  \label{eq:Ising two-point weight}
\end{equation}

Note that, by definition~\cite{MadrasSlade1996}, the two-point function for SAW on $G$ is again of the form \eqref{eq:Ising correlation via
  walk}, but with the weight given by 
\begin{equation}
  \rho(\omega) = J^{|\omega|}\ind(\text{$\omega$ is self-avoiding})
  \label{eq:SAW weight}
\end{equation}
where $J\in(0,\infty)$ is a parameter, referred to as the \emph{fugacity}.

\subsection{Unwrapping and winding}
\label{subsec:unwrapping and winding}
Let $\torus$ denote the $d$-dimensional discrete torus of period $L$. In what follows we identify the vertex set of $\torus$ with
$[-L/2,L/2)^d\cap\ZZ^d$. In defining $\Omega_{\torus}$ and $\Omega_{\ZZ^d}$ we take the root to be the origin.

  Let $\sW:\Omega_{\ZZ^d}\to \Omega_{\torus}$ denote the canonical bijection which wraps a $\ZZ^d$-walk onto a $\torus$-walk. Explicitly,
  for each $\omega\in\Omega_{\ZZ^d}$, the image $\tau=\sW(\omega)$ is defined recursively by setting $\tau_0=\omega_0=0$, and then for $0\le
  i \le |\omega|-1$ letting
  \begin{equation}
    \tau_{i+1}=\begin{cases}
    \tau_i + (\omega_{i+1}-\omega_i), & \tau_i+(\omega_{i+1}-\omega_{i}) \in \torus,\\
    \tau_i+(1-L)(\omega_{i+1}-\omega_{i}), &  \tau_i + (\omega_{i+1}-\omega_i) \not\in \torus.
    \end{cases}
    \label{eq:wrapping definition}
  \end{equation}

The number of windings of a $\torus$-walk along any specified coordinate axis can be conveniently expressed in terms of $\sW$. 
In particular, if $(x)_i$ denotes the $i$th coordinate of $x\in\ZZ^d$, the winding number of $\omega\in \Omega_{\torus}$ along the first
coordinate axis is
\begin{equation}
  \label{eq:winding number definition}
  \sR(\omega):=\left\lfloor\frac{|(e\circ\sW^{-1}(\omega))_1|}{L}\right\rfloor.
\end{equation}
As an illustration, consider $d=1$, $L=4$, and the walk $\omega=0,-1,-2,1,0,-1,-2,1$, which only takes steps to the left. Then
$|\omega|=7$, $e\circ\sW^{-1}(\omega)=-7$ and $\sR(\omega)=1$.

For given $\rho:\Omega_{\torus}\to[0,\infty)$ we define the corresponding two-point function $g_\rho: \torus\to[0,\infty)$ via
    \begin{equation}
      \label{eq:general two-point function definition}
       g_{\rho}(x) := \sum_{\tau\in \Omega_{\torus}} \rho(\tau) \ind[e(\tau)=x]
     \end{equation}
    and the corresponding \emph{unwrapped} two-point function $\tilde{g}_{\rho}:\ZZ^d\to [0,\infty)$ via
      \begin{align}
          \tilde{g}_\rho(z) &:= \sum_{\tau\in \Omega_{\torus}} \rho(\tau) \,\ind[e(\sW^{-1}(\tau)) = z ] \label{eq:unwrapped definition}\\
          &= \sum_{\zeta\in \Omega_{\ZZ^d}} \rho\circ\sW (\zeta) \,\ind[e(\zeta) = z] \label{eq:unwrapped alternative}.
      \end{align}
      We emphasise that if the weights are chosen via~\eqref{eq:Ising two-point weight} or~\eqref{eq:SAW weight}, then~\eqref{eq:general
        two-point function definition} reduces, respectively, to the Ising or SAW two-point functions considered in the previous section,
      specialised to the torus. Similarly, the unwrapped Ising and SAW two-point functions are defined by~\eqref{eq:unwrapped definition} specialised
      to~\eqref{eq:Ising two-point weight} and~\eqref{eq:SAW weight}, respectively.

      We also note that, following immediately from the definitions, we have
      \begin{equation}
        g_{\rho}(x) = \sum_{z\in \ZZ^d} \tilde{g}_{\rho}(x+zL).
      \end{equation}
      In this sense, the unwrapped two-point function is therefore a more fine-grained object than the torus two-point function.

      \subsection{SAW and Ising walk distributions}
      \label{subsec:SAW and Ising walk distributions}
We now describe in more detail the specific SAW and Ising walk ensembles which we study. We consider the variable-length ensemble of SAWs
on $\torus$, which corresponds to the set of all SAWs on $\torus$, rooted at the origin, and chosen randomly with a measure
proportional to the weight given in~\eqref{eq:SAW weight}. Let $\sS$ denote a random SAW chosen via this measure. We will be interested in
the distribution of the walk length $|\sS|$, defined in~\eqref{eq:walk length definition}, and winding number $\sR(\sS)$, defined
in~\eqref{eq:winding number definition}.  Moreover, it follows immediately from~\eqref{eq:unwrapped definition} and~\eqref{eq:SAW weight}
that the unwrapped SAW two-point function can be expressed in terms of $\sS$ via 
\begin{equation}
  \label{eq:SAW unwrapped two-point function estimator}
  \tilde{g}_{\rho}(z) = \frac{\PP[e(\sW^{-1}\circ\sS)=z]}{\PP[|\sS|=0]}, \qquad z\in\ZZ^d.
\end{equation}
Our simulations of $\sS$, discussed below, were performed using a lifted version~\cite{HuChenDeng2017} of the Berretti-Sokal
algorithm~\cite{BerrettiSokal1985}.

Now let us consider the Ising walk $\sT$. To begin, consider the probability measure on the state space
$\cup_{x\in\torus}\sC_{x}$, such that the probability of $A\in\cup_{x\in\torus}\sC_{x}$ is proportional to $[\tanh(\beta)]^{|A|}$. Let $\sA$
denote a random sample drawn from this measure. The distribution of $\sA$ is precisely the stationary distribution of the
Prokofiev-Svistunov worm algorithm~\cite{ProkofievSvistunov2001}, in which the worm tail is fixed to the origin.  Our simulations of $\sA$,
discussed below, were performed using such a worm algorithm. We will be interested in the induced distribution of $\sT(\sA)$. For
simplicity, we will henceforth adopt the abbreviation $\sT=\sT(\sA)$.

Analogously to SAW, it follows from~\eqref{eq:unwrapped definition} and~\eqref{eq:Ising two-point weight} that the
unwrapped Ising two-point function can be expressed exactly as in~\eqref{eq:SAW unwrapped two-point function estimator}, with $\sS$ replaced
by $\sT$.  Also analogously to SAW, we will again consider the induced distributions of $|\sT|$ and $\sR(\sT)$, which we refer to as the
Ising walk length and Ising winding number.

\subsection{Random-length random walks}
\label{subsec:RLRW definitions}
Let $(C_n)_{n \in \naturals}$ be an i.i.d.~sequence of uniformly random elements of $\{\pm e_1,...,\pm e_d\}$, where
$e_i=(0,\ldots,1,\ldots,0)\in\ZZ^d$ is the standard unit vector along the $i$th coordinate axis. Let $S_0=0$ and for $n\ge0$ set
$S_{n+1}=S_n+C_{n+1}$. Now let $\sN$ be an $\naturals$-valued random variable independent of $(C_n)_{n \in \naturals}$. The corresponding
\textit{Random-length Random Walk} on $\ZZ^d$ is the process $\sZ:=(S_n)_{n=0}^\sN$. Similarly, $\sX:=\sW(\sZ)$ is the corresponding RLRW on
$\torus$, where $\sW$ is the wrapping bijection defined in~\eqref{eq:wrapping definition}.

We also consider a loop-erased version of RLRW, constructed as follows. Recursively define a simple random walk $(R_i)_{i\in\NN}$ on $\torus$ by
applying~\eqref{eq:wrapping definition} to $(S_i)_{i\in\NN}$, and then perform chronological loop erasure on $(R_i)_{i\in\NN}$ until a walk
of length $\sN$ is generated. We refer to the resulting walk, denoted $\sL$, as the random-length loop-erased random walk (RLLERW) on
$\torus$. Note that $\sN$ must be bounded above by $L^d$ in order for $\sL$ to be well defined.

We define the two-point function of $\sZ$ to be
\begin{equation}
    \mathbb{E}\Bigg(\sum_{n=0}^{|\sZ|} \ind(\sZ_n= x)\Bigg)
\label{def:greens_function_rlrw}
\end{equation}
which gives the expected number of visits of $\sZ$ to $x \in \mathbb{Z}^d$. Analogous definitions hold for the RLRW and RLLERW on
$\torus$ by replacing $\sZ$ with $\sX$ and $\sL$, respectively.
As noted in the Introduction, in the special case in which $\sN$ is geometrically distributed, the two-point function of RLRW on $\ZZ^d$
corresponds to the lattice Green function, which is very well studied; see~\cite{MichtaSladeLGF2021} and references therein.

A simple rearrangement of~\eqref{def:greens_function_rlrw} (see~\ref{sec:Appendix A}) shows that it can be expressed in the
form~\eqref{eq:Ising correlation via walk} with
\begin{equation}
  \label{eq:RLRW weight}
  \rho(\omega) = \frac{\PP(\sN\ge|\omega|)}{(2d)^{|\omega|}}, \qquad \omega\in\Omega_{\ZZ^d}.
\end{equation}
Precisely the same statement also holds for $\sX$, with the same weights, but replacing $\Omega_{\ZZ^d}$ with $\Omega_{\torus}$.
Moreover, an analogous statement also holds for $\sL$ with (see~\ref{sec:Appendix A})
\begin{equation}
  \label{eq:RLLERW weight}
  \rho(\omega) = \PP(\sL\sqsupseteq \omega),
\end{equation}
where for any two walks $\tau,\omega\in\Omega_{\torus}$, the notation $\tau\sqsupseteq\omega$ implies that $|\tau|\ge|\omega|$ and
$\tau_i=\omega_i$ for all $0\le i \le |\omega|$. 

The unwrapped two-point functions of $\sZ$, $\sX$, and $\sL$ are defined by~\eqref{eq:unwrapped definition}, with the appropriate choices of
weight $\rho$ just outlined. Now, since for any $\omega\in \Omega_{\ZZ^d}$ we have $|\sW(\omega)|=|\omega|$, it follows
from~\eqref{eq:unwrapped alternative} and~\eqref{eq:RLRW weight} that the unwrapped two-point function of $\sX$ is simply 
\begin{align}
  \tilde{g}_{\rho}(z) &= \sum_{\zeta\in \Omega_{\ZZ^d}}\PP(\sN\ge|\zeta|) \frac{\ind(e(\zeta)=z)}{(2d)^{|\zeta|}} \nonumber\\
    &=\sum_{n=0}^\infty\PP(\sN\ge n) \sum_{\zeta\in \Omega_{\ZZ^d}^n} \frac{\ind(e(\zeta)=z)}{(2d)^{|\zeta|}} \nonumber\\
    &=\sum_{n=0}^\infty\PP(\sN\ge n) \PP(S_n=z) \nonumber\\
  &=\EE \left(\sum_{n=0}^{|\sZ|} \ind(\sZ_n=z)\right).
  \label{eq:unwrapped RLRW g on torus is standard g of RLRW on lattice}
\end{align}
In other words, the unwrapped two-point function of the RLRW on the torus is simply the two-point function of the corresponding RLRW on
$\ZZ^d$. Now, for an appropriate choice of distribution for $\sN$, the unwrapped two-point function of $\sX$
is expected to display the same asymptotics as the unwrapped two-point functions for the SAW and Ising walk.
This then motivates studying the two-point function of $\sZ$, which we do in Sections~\ref{subsec:theorem} and~\ref{sec:proofs}.

Finally, we note that, after some rearrangement (see~\ref{sec:Appendix A}), the unwrapped two-point function of $\sL$ can be
expressed as 
\begin{equation}
  \label{eq:LERW unwrapped two-point function}
 \tilde{g}_{\rho}(z) = \PP[\sW^{-1}(\sL)\ni z],
\end{equation}
which can be easily estimated via simulation. 

\subsection{Numerical details}
\label{subsec:numerical details}
Our simulations of the Ising model were performed at the exact infinite-volume critical point in two dimensions~\cite{Baxter2016}, and at
the estimated location of the infinite-volume critical point $\tanh(\beta_{\mathrm{c}}) = 0.113~424~8(5)$~\cite{LundowMarkstrom2014} in five 
dimensions. The SAW model was simulated at the estimated location of the infinite-volume critical points,
$J_{\mathrm{c}}=0.379~052~277~758(4)$~\cite{Jensen2003} in two dimensions, $J_{\mathrm{c}} = 0.113~140~84(1)$~\cite{HuChenDeng2017} in five
dimensions, and $J_{\mathrm{c}}=0.091~927~86(4)$~\cite{Owczarek2001} in six dimensions.

For the Ising model, we simulated linear system sizes up to $L=31$ in five dimensions. For SAW, we simulated linear system sizes up to
$L=221$ in five dimensions, and $L=57$ in six dimensions. For the RLLERW, we simulated linear system sizes up to $L=161$ in five dimensions.

Our error estimation follows standard procedures, see for instance~\cite{Young2015,Sokal1996}. Analyses of integrated autocorrelation times
for the worm and irreversible Berretti-Sokal algorithms are presented in~\cite{DengGaroniSokal2007} and~\cite{HuChenDeng2017},
respectively.

\section{Results}
\label{sec:results}

\subsection{Universal walk length distribution}
\label{sec:walk length distribution}
Let $\sK$ denote a self-avoiding walk on the complete graph $K_n$, rooted at a fixed vertex, distributed according to the variable-length
ensemble. The probability distribution of $\sK$ is then proportional to the weight given in~\eqref{eq:SAW weight}.
It was shown in~\cite{Yadin2016} that the critical fugacity for $\sK$ occurs at $J=1/n$. Furthermore, at criticality, it is
known~\cite[Theorem 1.1]{DengGaroniGrimmNasrawiZhou2019} (see also~\cite{Yadin2016,SladeKn2020}) that
\begin{equation}
  \begin{split}
  \EE(|\sK|) &\sim \sqrt{\frac{2}{\pi}} \sqrt{n} \\
  \var(|\sK|) &\sim \left(1-\frac{2}{\pi}\right) n.
  \label{eq:Kn SAW cumulants}
  \end{split}
\end{equation}
From universality, one would then expect that if one considered the walk lengths of the critical SAW or Ising models on $\torus$ with
$d>\dc$, then their means should scale as $L^{d/2}$, and their variances should scale as $L^d$.
\Fref{fig:walk length mean and variance on torus} provides strong evidence that this is the case.

As a simple consequence, this would imply that the ratio of the mean and standard deviation of the walk length therefore converges to a
positive constant. From~\eqref{eq:Kn SAW cumulants}, the value of this constant for $\sK$ is
\begin{equation}
  \lim_{n\to\infty} \frac{\EE(|\sK|)}{\sqrt{\var(|\sK|)}} = \sqrt{\frac{2}{\pi-2}} =: \varphi
  \label{eq:Kn SAW moment ratio}
\end{equation}
For comparison, the analogous ratio for the SAW and Ising models on $\torus$ is plotted in \Fref{subfig:walk length moment ratio}.
The SAW data suggest it is plausible, for both $d=5$ and $d=6$, that $\EE(|\sS|)/\sqrt{\var(|\sS|)}$ is converging to
the complete graph value, $\varphi$. The $d=5$ Ising data suggest, however, that $\EE(|\sT|)/\sqrt{\var(|\sT|)}$ is converging to a constant
strictly less than $\varphi$, although it is certainly numerically close to $\varphi$.

In addition to the asymptotic moments given in~\eqref{eq:Kn SAW cumulants}, central limit theorems have been established for $\sK$. Indeed, it
follows from~\cite[Theorem 1.2]{DengGaroniGrimmNasrawiZhou2019} (see also \cite[Theorem 1.3]{SladeKn2020}) that, at criticality, 
\begin{equation}
\lim_{n\to\infty} \PP\left(\frac{|\sK| - \EE(|\sK|)}{\sqrt{\var(|\sK|)}} \le x\right) = \PP\left(\frac{|X|-\EE(|X|)}{\sqrt{\var(|X|)}}\le x\right)
\label{eq:standardised limit theorem for Kn SAW}
\end{equation}
for all $x\in\RR$, where $X$ is a standard normal random variable.
We note that the law of $|X|$ is the half-normal distribution, which can be given explicitly by
\begin{equation}
  \PP(|X|\le x) = \ind(x>0)[1-2\Phibar(x)]
  \label{eq:half-normal distribtion function}
\end{equation}
where $\Phibar$ denotes the standard normal tail distribution, so that for all $x\in\RR$ 
\begin{equation}
\Phibar(x) := \frac{1}{\sqrt{2\pi}}\int_{x}^\infty \e^{-s^2/2}\diff s.
\label{eq:normal tail distribtion}
\end{equation}

For later reference, we shall denote by $F$ the law of the standardised version of $|X|$ appearing on the right-hand side
of~\eqref{eq:standardised limit theorem for Kn SAW}, i.e. for $x\in\RR$
\begin{equation}
  F(x):= \PP\left(\frac{|X|-\EE(|X|)}{\sqrt{\var(|X|)}}\le x\right).
  \label{eq:def of standardised half normal}
\end{equation}
By universality, one would expect that the standardised distribution functions of $|\sS|$ and $|\sT|$ on $\torus$ should also converge to $F$.
\Fref{fig:walk length distribution on torus} (right panel) provides strong evidence that this is indeed the case.

\begin{figure}
  \centering
   \begin{subfigure}{0.425\textwidth}
    \centering
    \includegraphics[width=\textwidth]{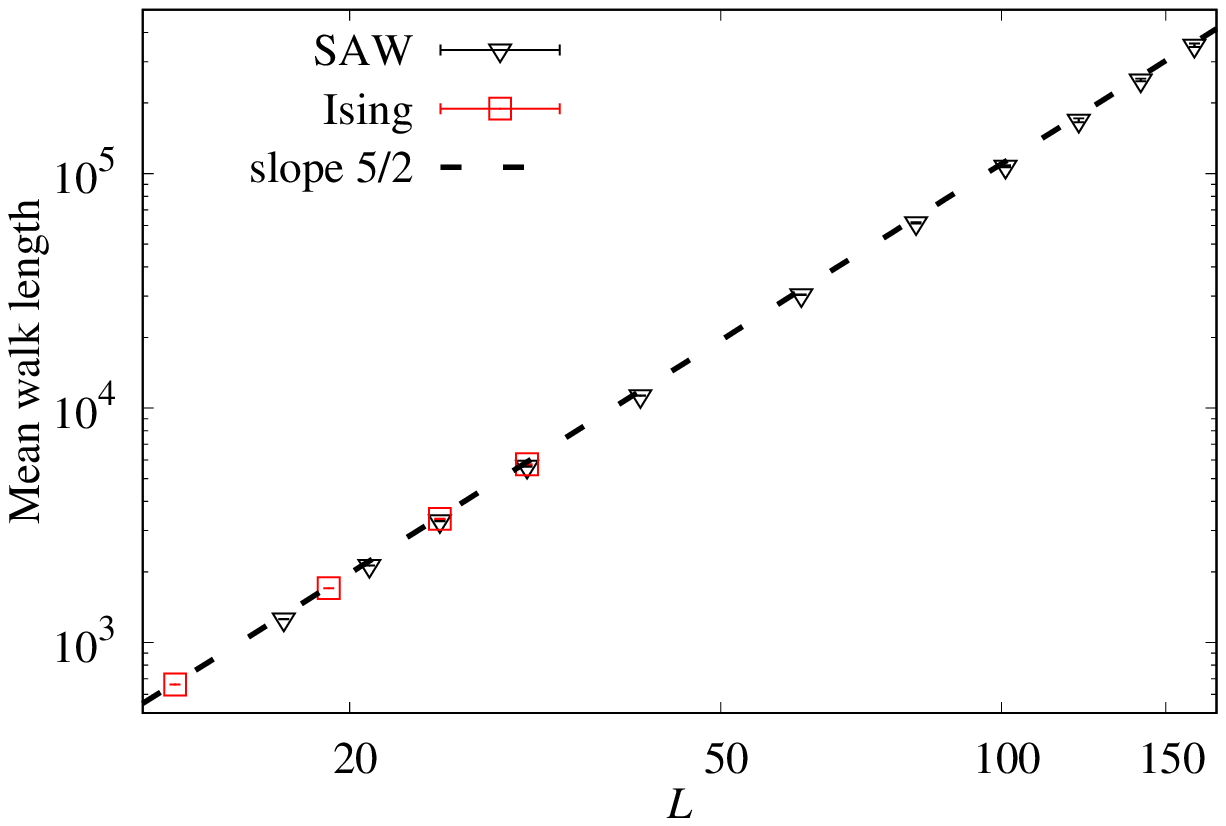}
    \caption{\label{subfig:walk length mean}}
   \end{subfigure}
   \begin{subfigure}{0.425\textwidth}
    \centering
    \includegraphics[width=\textwidth]{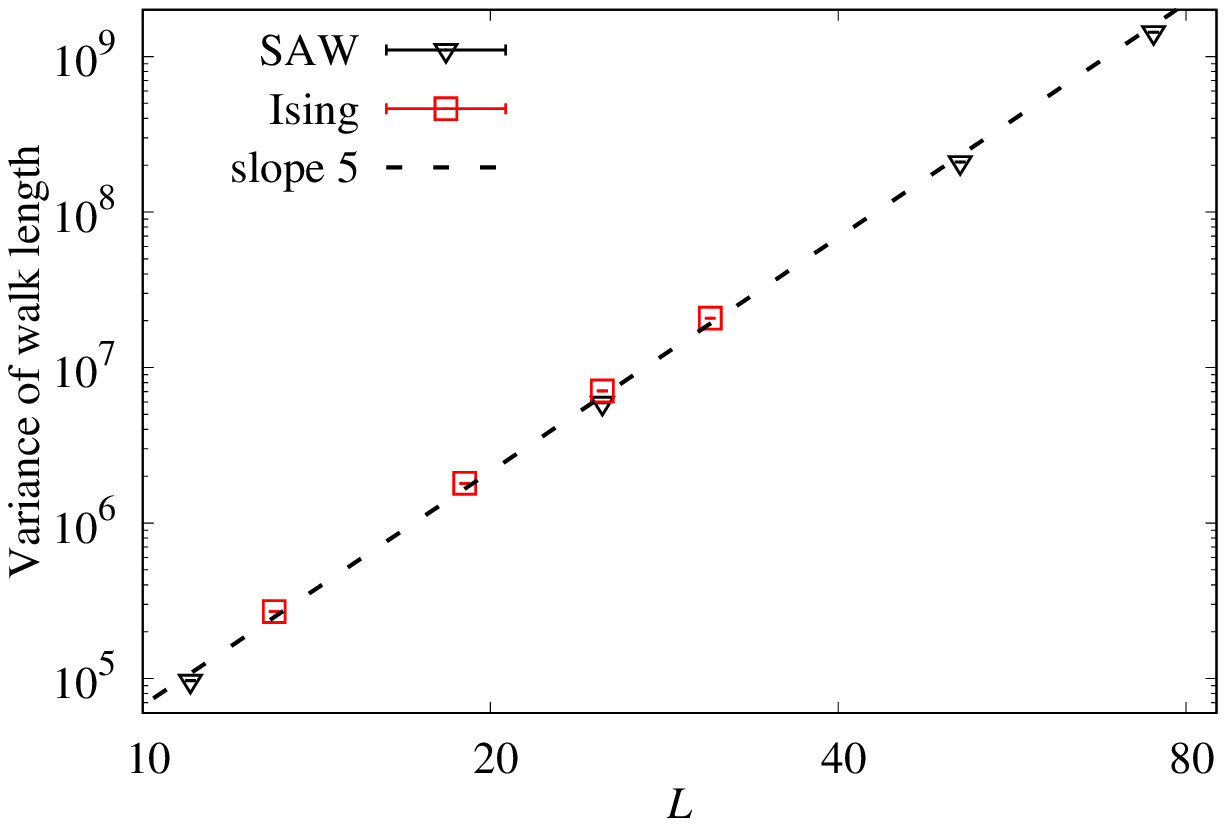}
    \caption{\label{subfig:walk length variance}}
   \end{subfigure}
   \caption{
     (\subref{subfig:walk length mean}) Simulated mean of the critical SAW and Ising walk lengths on five-dimensional tori. 
     (\subref{subfig:walk length variance}) Simulated variance of the critical SAW and Ising walk lengths on five-dimensional tori.
   }
\label{fig:walk length mean and variance on torus}
\end{figure}

\begin{figure}
  \centering
     \begin{subfigure}{0.455\textwidth}
    \centering
    \includegraphics[width=\textwidth]{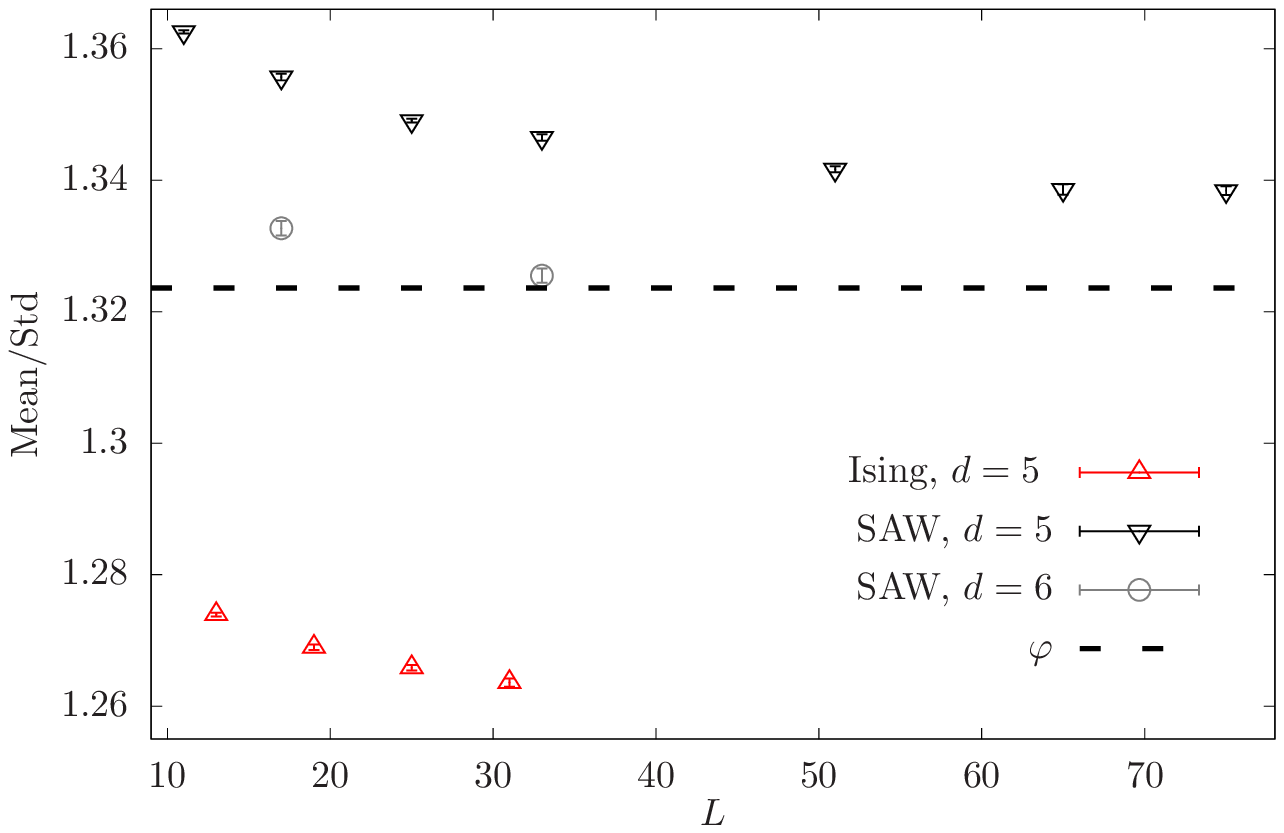}
    \caption{\label{subfig:walk length moment ratio}}
   \end{subfigure}
     \begin{subfigure}{0.455\textwidth}
    \centering
    \includegraphics[width=\textwidth]{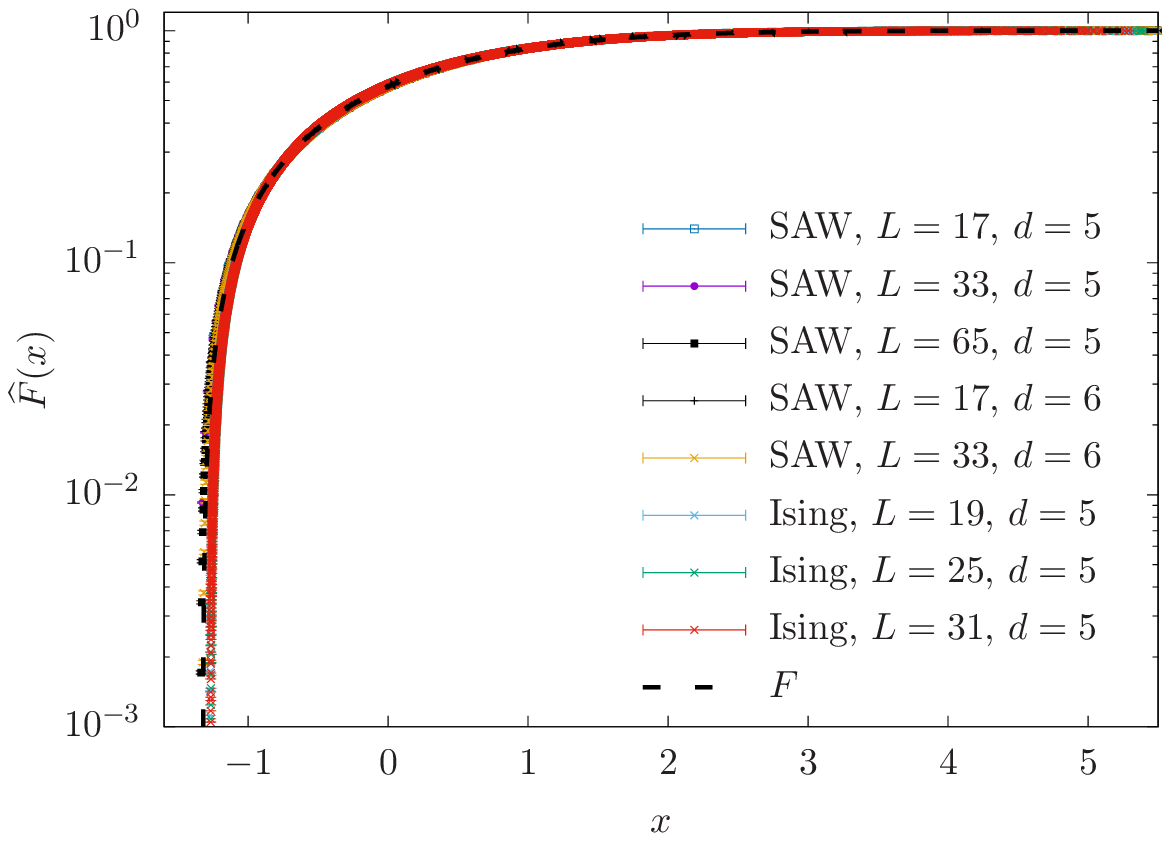}
    \caption{\label{subfig:walk length histograms}}
     \end{subfigure}
     \caption{
       (\subref{subfig:walk length moment ratio}) Ratio of simulated mean and standard deviation of the critical SAW walk length on five- and
       six-dimensional tori, and the critical Ising walk length on five-dimensional tori.
       The dashed curve corresponds to the limiting value $\varphi$ for the case of SAW on $K_n$; see~\eqref{eq:Kn SAW moment ratio}.
       (\subref{subfig:walk length histograms})
       Simulated distribution function, $\widehat{F}$, of the standardised walk length $(|\sS| - \EE(|\sS|))/\sqrt{\var(|\sS|)}$
       for critical SAW on five- and six-dimensional tori, as well as the standardised critical Ising walk length
       $(|\sT| - \EE(|\sT|))/\sqrt{\var(|\sT|)}$ on five-dimensional tori.
       The dashed curve corresponds to $F$ given in~\eqref{eq:def of standardised half normal}.
     }
     \label{fig:walk length distribution on torus}
\end{figure}

\subsection{Proliferation of windings}
\label{subsec:proliferation of windings}
We now consider the large $L$ asymptotics of $\EE(\sR)$. Fig.~\ref{fig:winding} plots $\EE(\sR)$ with $d=2,5,6$ for SAW, and $d=2,5$ for the
Ising model. In dimensions below $\dc$, we find that $\EE(\sR)$ is bounded as $L \to \infty$. By contrast, we observe that windings
proliferate for $d>\dc$. It was conjectured in~\cite{GrimmElciZhouGaroniDeng2017} that $\EE(\sR)$ should scale as $L^{d/4 -1}$ at
criticality when $d>\dc$. For $d=5$, fitting $\EE(\sR)$ to a power law ansatz produces an exponent value of $0.24(3)$ for the Ising
model and $0.30(6)$ for SAW. For $d=6$ SAW, the analogous fit yields an exponent value of $0.46(6)$. In each case, the estimated and conjectured
exponent values agree within error bars. We note that, in the Ising case, the definition of $\sR$ considered here differs from that
used in~\cite{GrimmElciZhouGaroniDeng2017}, the current version be a more natural analogue of the SAW definition. The asymptotic behaviour is
the same in both cases however.

Finally, we also studied the average winding number of a RLLERW with $d=5$ whose walk length is drawn from the asymptotic walk length distribution of
the complete-graph SAW; i.e. with standardised distribution function $F$, and with mean and variance
given by the right-hand side of~\eqref{eq:Kn SAW cumulants} with $n=L^d$. 
Our fits lead to the exponent value $0.29(5)$, in agreement with the SAW and Ising models.

\begin{figure}
\centering
\includegraphics[scale=0.7]{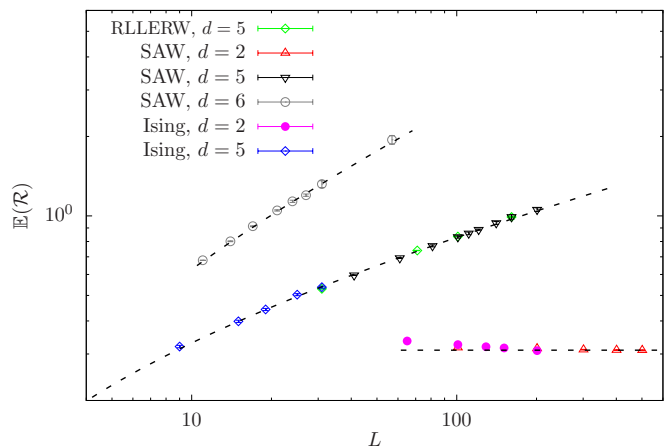}
\caption{Simulated values of $\EE(\sR)$ for the critical Ising and SAW models, on discrete tori in various dimensions.
Analogous results are also shown for RLLERW with $d=5$ and walk length chosen
via the asymptotic distribution of SAW on the complete-graph.
To emphasise the universal scaling, the data for all models in each given dimension were translated onto a single curve by multiplying by
suitable ($L$-independent) constants. The number of windings is clearly asymptotically constant in $L$ for $d<\dc$, while above $\dc$ windings
proliferate as $L$ increases.}
\label{fig:winding}
\end{figure}

\subsection{Unwrapped two-point functions}
\label{subsec:theorem}
We begin by stating the following proposition for the two-point function of RLRW on $\mathbb{Z}^d$. The proof is deferred to
\Sref{sec:proofs}. We emphasise that, due to \eqref{eq:unwrapped RLRW g on torus is standard g of RLRW on lattice},
Proposition~\ref{prop:infinite_lattice} also immediately implies the analogous result for the unwrapped two-point function of RLRW on the torus.
\begin{proposition}
  \label{prop:infinite_lattice}
  Consider a sequence of $\naturals$-valued random variables $\sN_L$, such that there exists a non-decreasing sequence $a_L>0$ for which $\sN_L/a_L$
  converges in distribution, as $L\to\infty$, to a random variable with distribution function $G$. Now fix an integer $d\ge3$, and let
  $z_L\in\ZZ^d$ be a sequence such that $\|z_L\|\to\infty$ as $L\to\infty$, with 
  $\xi:=\lim_{L\to\infty}\|z_L\|/\sqrt{a_L}\in(0,+\infty]$ well defined~\footnote{As an element of the extended reals; as
  $L\to\infty$, either $\|z_L\|^2/a_L$ converges, or it diverges to $+\infty$.}. Then, the two-point function of $\sZ$ satisfies 
  \begin{equation*}
    \lim_{L\rightarrow \infty} \|z_L\|^{d-2} g(z_L)  =
    \frac{d}{2\pi^{d/2}}\int_{0}^\infty s^{d/2 - 2}\e^{-s}\,\left[1-G\left(\frac{d}{2}\frac{\xi^2}{s}\right)\right] \diff s
  \end{equation*}
\end{proposition}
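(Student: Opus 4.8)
The starting point is the series representation of the two-point function of $\sZ$, already recorded in the chain of equalities~\eqref{eq:unwrapped RLRW g on torus is standard g of RLRW on lattice}: by the independence of $\sN_L$ and the walk and Tonelli's theorem,
\[
  g(z_L) = \sum_{n=0}^\infty \PP(\sN_L \ge n)\,\PP(S_n = z_L).
\]
The plan is to analyse the large-$L$ behaviour of the right-hand side as a Riemann sum on the diffusive scale $n\asymp\|z_L\|^2$. Because simple random walk on $\ZZ^d$ obeys a parity constraint, I would first group the terms into consecutive pairs $\{2j,2j+1\}$: in each pair exactly one index $m_j\in\{2j,2j+1\}$ has the parity of $\|z_L\|_1$ and contributes, so $g(z_L)=\sum_{j\ge0}\PP(\sN_L\ge m_j)\PP(S_{m_j}=z_L)$. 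Writing $m_j = s_j\|z_L\|^2$, the $s_j$ have spacing $\Delta s = 2/\|z_L\|^2\to0$; equivalently one rewrites $\|z_L\|^{d-2}g(z_L)$ exactly as $\tfrac12\int_0^\infty \phi_L(s)\,\diff s$, where $\phi_L(s):=\|z_L\|^d\,\PP(\sN_L\ge m(s,L))\,\PP(S_{m(s,L)}=z_L)$ and $m(s,L)$ is the admissible integer nearest $s\|z_L\|^2$.

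The next step is to identify the pointwise limit of $\phi_L$. For the walk factor I would invoke the local central limit theorem for simple random walk in its sharp form with error bounds, e.g.\ from~\cite{LawlerLimic2010}, which gives $\PP(S_m=z_L)=2\,\bigl(\tfrac{d}{2\pi m}\bigr)^{d/2}\e^{-d\|z_L\|^2/(2m)}(1+o(1))$ uniformly for $m$ with $\|z_L\|^2/m$ in a fixed compact subset of $(0,\infty)$, so that $\|z_L\|^d\,\PP(S_{m(s,L)}=z_L)\to 2\,\bigl(\tfrac{d}{2\pi s}\bigr)^{d/2}\e^{-d/(2s)}$. For the length factor, $m(s,L)/a_L = s\,\|z_L\|^2/a_L\,(1+o(1))\to s\xi^2$ (in $(0,+\infty]$), and the assumed weak convergence $\sN_L/a_L\Rightarrow G$ together with a monotonicity sandwich gives $\PP(\sN_L\ge m(s,L))=\PP(\sN_L/a_L\ge m(s,L)/a_L)\to 1-G(s\xi^2)$ at every continuity point $s\xi^2$ of $G$, and $\to0$ when $\xi=+\infty$ (by tightness), consistently with the convention $G(+\infty)=1$. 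Since $G$ has at most countably many discontinuities, this holds for a.e.\ $s$, so $\phi_L(s)\to 2\,\bigl(\tfrac{d}{2\pi s}\bigr)^{d/2}\e^{-d/(2s)}\,[1-G(s\xi^2)]$ almost everywhere. I would then pass this through the integral by dominated convergence, using the Gaussian upper bound $\PP(S_m=z)\le C\,m^{-d/2}\e^{-c\|z\|^2/m}$ valid for all $m\ge1$: after rescaling it produces the $L$-independent envelope $\phi_L(s)\le C'\,s^{-d/2}\e^{-c/s}$, which is integrable on $(0,\infty)$ precisely because $d\ge3$ (integrability at $s=\infty$) and because of the exponential factor (integrability at $s=0$), and which simultaneously disposes of the small-$m$ (large-deviation) and large-$m$ (diffusive-tail) regions where the local CLT asymptotics are not yet in force. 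This gives $\|z_L\|^{d-2}g(z_L)\to\bigl(\tfrac{d}{2\pi}\bigr)^{d/2}\int_0^\infty [1-G(s\xi^2)]\,s^{-d/2}\e^{-d/(2s)}\,\diff s$, and the substitution $u=d/(2s)$ brings this to $\tfrac{d}{2\pi^{d/2}}\int_0^\infty u^{d/2-2}\e^{-u}\,[1-G(d\xi^2/(2u))]\,\diff u$, the prefactor simplifying via $\bigl(\tfrac{d}{2\pi}\bigr)^{d/2}(d/2)^{1-d/2}=\tfrac{d}{2\pi^{d/2}}$; this is the claimed formula.

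I expect the main obstacle to be the technical bookkeeping in the middle step: reconciling a local CLT whose relative error is $o(1)$ \emph{uniformly} over the diffusive window $m\asymp\|z_L\|^2$ with a single dominating function valid over \emph{all} $m\ge1$ --- in particular over the large-deviation regime $m\ll\|z_L\|^2$, where the Gaussian approximation fails and only the exponential upper bound is available. The parity constraint of simple random walk must also be threaded through carefully, so that the factor $2$ in the local CLT is exactly offset by the factor $\tfrac12$ arising from summing over every second integer; this is what motivates grouping terms in pairs at the outset. By contrast, the probabilistic hypothesis --- the weak convergence of $\sN_L/a_L$ --- enters only softly, through the continuity-point statement for $G$ and, when $\xi=+\infty$, through tightness.
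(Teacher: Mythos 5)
Your argument is correct, and it reaches the stated formula by a route that is organised differently from the paper's, even though the core ingredients (local CLT, a Gaussian upper bound on $p_n(z)$, dominated convergence, and the final substitution $u=d\|z\|^2/2n$) are the same. The paper first replaces $p_n(z)$ by $\bar{p}_n(z)/2$ summed over \emph{all} $n$, producing a smooth main term $D$ plus two explicit error terms: a parity-oscillation term $E_1$ and an LCLT-error term $E_2$; both are shown to be $O(\|z_L\|^{-d+\epsilon})$ via Lemma~\ref{lem:p and pbar sum bounds}, which requires the summed quantitative LCLT bound $\sum_n|p_n-\bar{p}_n|$ \emph{and} a bound on consecutive differences $\sum_n|\bar{p}_n-\bar{p}_{n+1}|$; the main term is then sandwiched between two integrals by monotonicity and handled by dominated convergence. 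You instead keep the exact $p_n$ throughout, absorb the parity constraint at the outset by pairing indices, represent the sum exactly as $\tfrac12\int_0^\infty\phi_L(s)\,\diff s$ in the diffusive variable $s=n/\|z_L\|^2$, and apply dominated convergence once: the LCLT is needed only qualitatively (relative error $o(1)$ at each fixed $s$, which follows from $|p_m-\bar{p}_m|\le Cm^{-d/2-1}$ since $\bar p_m(z_L)\asymp m^{-d/2}$ in the bulk), while the uniform bound $\PP(S_m=z)\le Cm^{-d/2}\e^{-c\|z\|^2/m}$ supplies the integrable envelope $C's^{-d/2}\e^{-c/s}$ that controls both the large-deviation regime $m\ll\|z_L\|^2$ and the tail $m\gg\|z_L\|^2$ (the latter using $d\ge3$). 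What your approach buys is economy — no separate error terms, no consecutive-difference estimate, and the factor $2$ in the LCLT cancels cleanly against the spacing $2/\|z_L\|^2$ of admissible indices; what it costs is that the full Gaussian upper bound on $p_m(z)$ must be invoked as a black box (the paper derives essentially the same bound from the maximal inequality inside the proof of Lemma~\ref{lem:p and pbar sum bounds}, so this is a citation rather than a gap). The remaining steps — the convergence-of-types argument giving $\PP(\sN_L\ge m(s,L))\to 1-G(s\xi^2)$ at continuity points, the treatment of $\xi=+\infty$ by tightness, and the verification that the prefactors combine to $\tfrac{d}{2\pi^{d/2}}$ — all check out.
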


As a first observation, we note that, provided $G$ is continuous at the origin, as $\xi\to0$ the right-hand side of the limit appearing in Proposition~~\ref{prop:infinite_lattice} reduces to 
\begin{equation}
  \lim_{\xi\to0}  \frac{d}{2\pi^{d/2}}
  \int_{0}^\infty s^{d/2 - 2}\e^{-s}\,\left[1-G\left(\frac{d}{2}\frac{\xi^2}{s}\right)\right] \diff s = 
  \frac{d}{2\pi^{d/2}} \Gamma(d/2-1)
\end{equation}
in agreement with the well-known asymptotics of the two-point function of simple random walk (see e.g.~\cite[Theorem
  4.3.1]{LawlerLimic2010}). This is to be expected, since typical walks of length $a_L$ explore a ball whose radius is of order
$\sqrt{a_L}$, and $\xi\to0$ corresponds to the case where $\sqrt{a_L}$ dominates the spatial scale $\|z_L\|$ probed, meaning the walk length
grows so fast that the finiteness of the walk is not observed.

We are particularly interested in the case where $G$ corresponds to the SAW and Ising models on high-dimensional tori. 
The numerical results of \Sref{sec:walk length distribution} lead to the conjecture that for $d>4$ at criticality
$\EE(|\sS|) \sim B_{\sS,d} L^{d/2}$ and $\sqrt{\var(|\sS|)}\sim A_{\sS,d} L^{d/2}$, and $(|\sS|-\EE(|\sS|))/\sqrt{\var(|\sS|)}$ converges weakly
to $F$, where $A_{\sS,d},B_{\sS,d}>0$ and $F$ is as given in~\eqref{eq:def of standardised half normal}. Assuming the
validity of this conjecture, it follows from standard convergence of types arguments (see e.g.~\cite[pp. 193]{Billingsley94}) that
\begin{equation}
  \lim_{L\to\infty}\PP\left(\frac{\sS}{L^{d/2}}\le x\right)= F\left(\frac{x-B_{\sS,d}}{A_{\sS,d}}\right)
  \label{eq:conjectured limit of SAW distribution function}
\end{equation}
for all $x\in\RR$.

Now let $\sN=|\sS|$, $a_L=L^{d/2}$ and for fixed $\xi\in(0,\infty)$ let $z_L=\lfloor L^{d/4}\xi\rfloor e_1$. Assuming the validity
of~\eqref{eq:conjectured limit of SAW distribution function} it follows from Proposition~\ref{prop:infinite_lattice} that as $L\to\infty$
the unwrapped two-point function of the corresponding RLRW on $\torus$ satisfies
\begin{equation}
  \|z_L\|^{d-2}\,\tilde{g}(z_L) \sim
  H_d(1,1/A_{\sS,d},B_{\sS,d}/A_{\sS,d};\xi)
  \label{eq:unwrapped RLRW g with SAW walk length}
\end{equation}
where
\begin{equation}
H_d(\alpha,\beta,\gamma;\xi):= \alpha \frac{d}{2\pi^{d/2}}\,\int_{0}^{\infty} s^{d/2-2} e^{-s}
\left[1-F\left(\beta\,\frac{d\xi^2}{2s} -\gamma\right)\right]\diff s.
\end{equation}

Universality then makes it natural to conjecture that the asymptotics of $\|z_L\|^{d-2}\,\tilde{g}(z_L)$ for the SAW and Ising models on
the torus should also be given by $H_d(\alpha,\beta,\gamma;\xi)$, for suitable model-dependent values of the constants, $\alpha,\beta,\gamma$. 
Figures~\ref{subfig:unwrappedSAW} and \ref{subfig:unwrappedIsing} provide strong evidence in favour of these conjectures.
In Figure~\ref{subfig:unwrappedSAW}, the constants for SAW are set to $\alpha=0.85$, $\beta=1.5/A_{\sS,d}$, and $\gamma=
B_{\sS,d}/A_{\sS,d}$, while in~\ref{subfig:unwrappedIsing} the constants for the Ising model are set to $\alpha = 1$, $\beta=1.2/A_{\sT,d}$ and
$\gamma=B_{\sT,d}/A_{\sT,d}$. 

In addition to the Ising and SAW cases, in \Fref{subfig:unwrappedRLLERW} we plot the two-point function for RLLERW with
walk length chosen via the asymptotic distribution of SAW on the complete-graph,
which again appears to be described by $H_d(\alpha,\beta,\gamma;\xi)$.
In this case, we set $\alpha = 0.75$, $\beta=1.2/\sqrt{1-2/\pi}$ and $\gamma=\varphi$; c.f.~\eqref{eq:Kn SAW cumulants} and~\eqref{eq:Kn SAW
  moment ratio}.

We note that the two-point functions of the critical SAW~\cite[Theorem 1.1]{Hara2008} and Ising~\cite[Theorem 1.3]{Sakai2007} models on
$\ZZ^d$ are known to satisfy
\begin{equation}
  \lim_{\|z\|\to \infty} \|z\|^{d-2}\,g(z) = A \frac{d}{2\pi^{d/2}}\Gamma(d/2-1)
  \label{eq:Hara asymptotics}
\end{equation}
where the non-universal constant $A$ can be expressed in terms of quantities appearing in the lace expansion.
Our conjecture, if true, would therefore provide natural
finite-size analogues/refinements of~\cite[Theorem 1.1]{Hara2008} and~\cite[Theorem 1.3]{Sakai2007}.
We remark that for SAW in $d=5$ it follows rigorously from bounds\footnote{Specifically, Equations (1.21), (1.25) and the connective
  constant bound on page 238} established in~\cite{HaraSladeRMP1992} that $0.81<A<0.92$;
the value of $\alpha=0.85$ used in Figure~\ref{subfig:unwrappedSAW} for $d=5$ SAW is therefore consistent with these rigorous bounds on the
value of $A$.

\begin{figure*}[ht!]
  \centering
   \begin{subfigure}{0.425\textwidth}
    \centering
    \includegraphics[width=\textwidth]{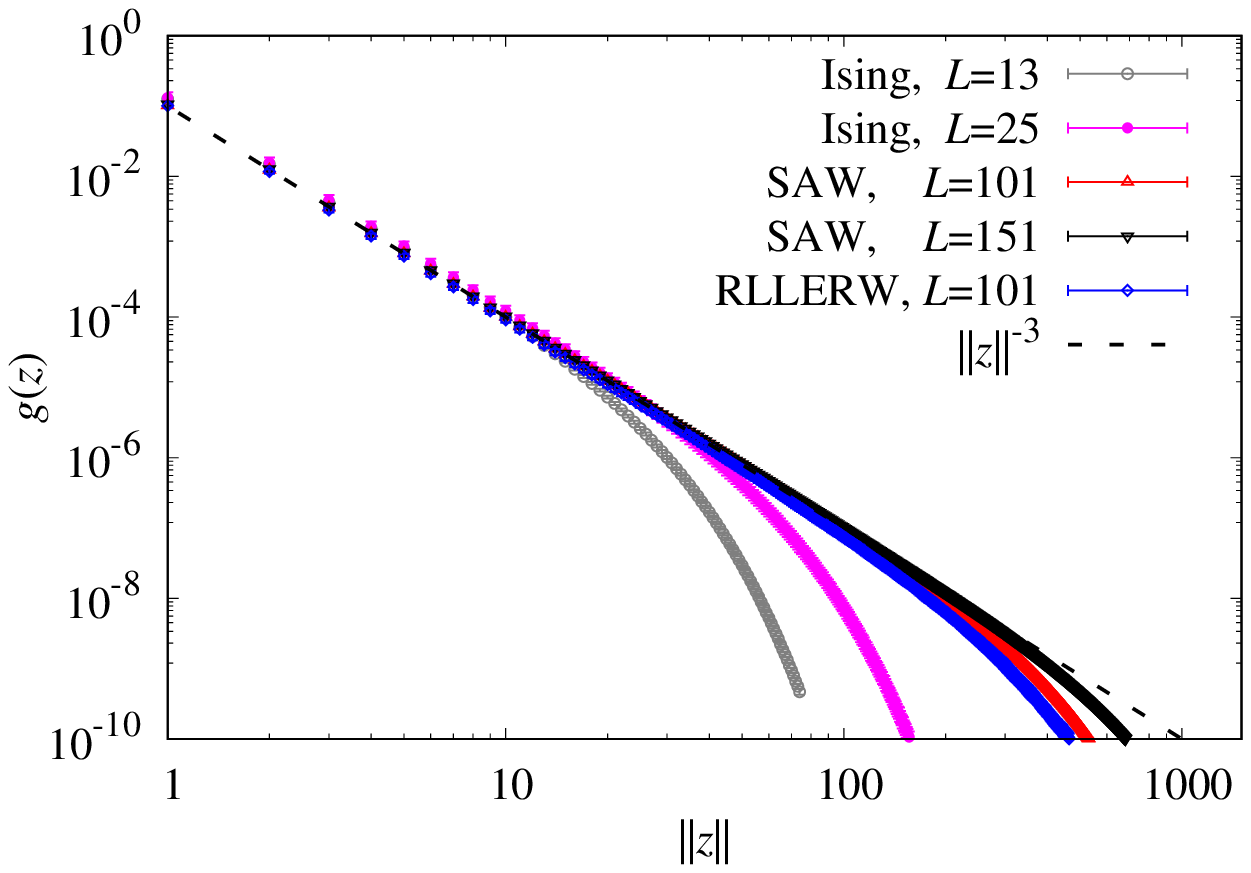}
    \caption{\label{subfig:unwrapped}}
   \end{subfigure}
   \begin{subfigure}{0.425\textwidth}
    \centering
    \includegraphics[width=\textwidth]{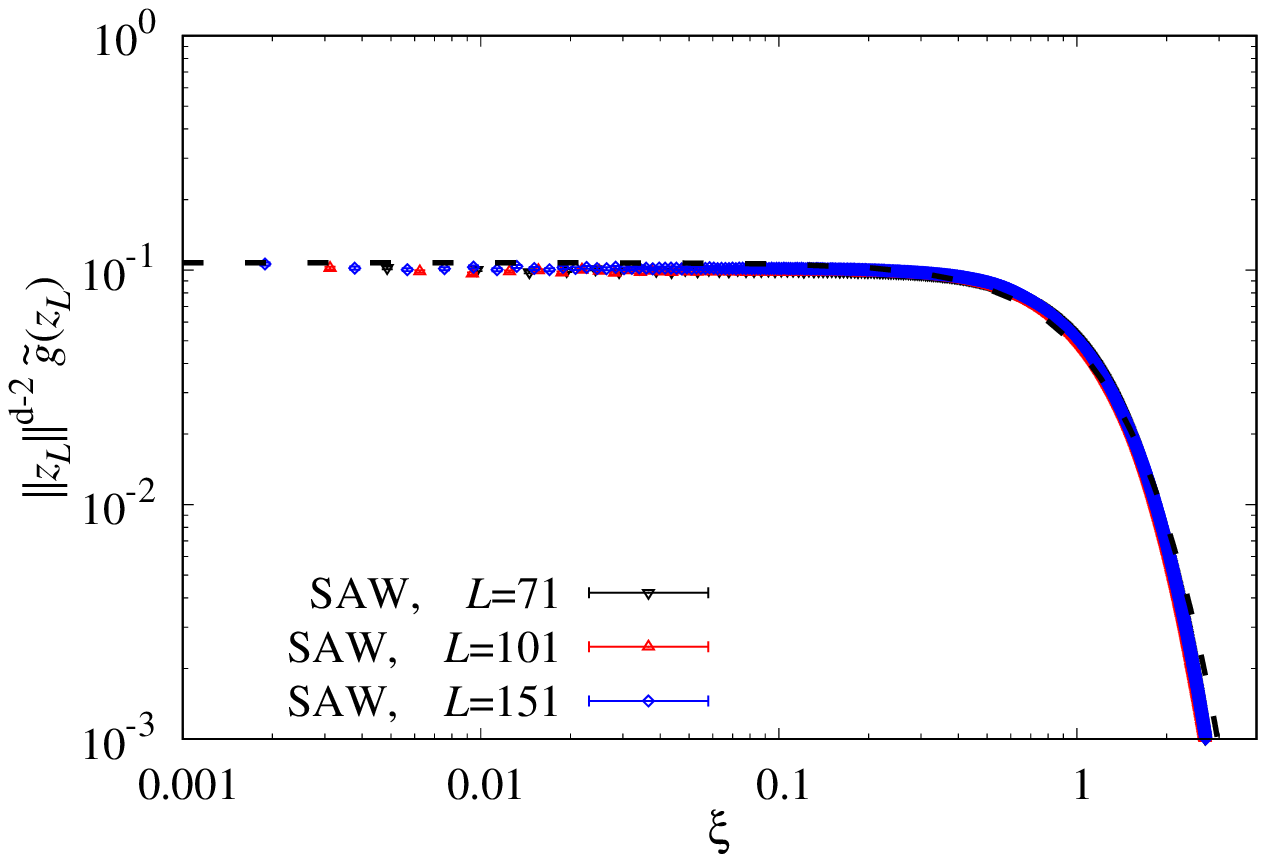}
    \caption{\label{subfig:unwrappedSAW}}
  \end{subfigure}
   \begin{subfigure}{0.425\textwidth}
    \centering
    \includegraphics[width=\textwidth]{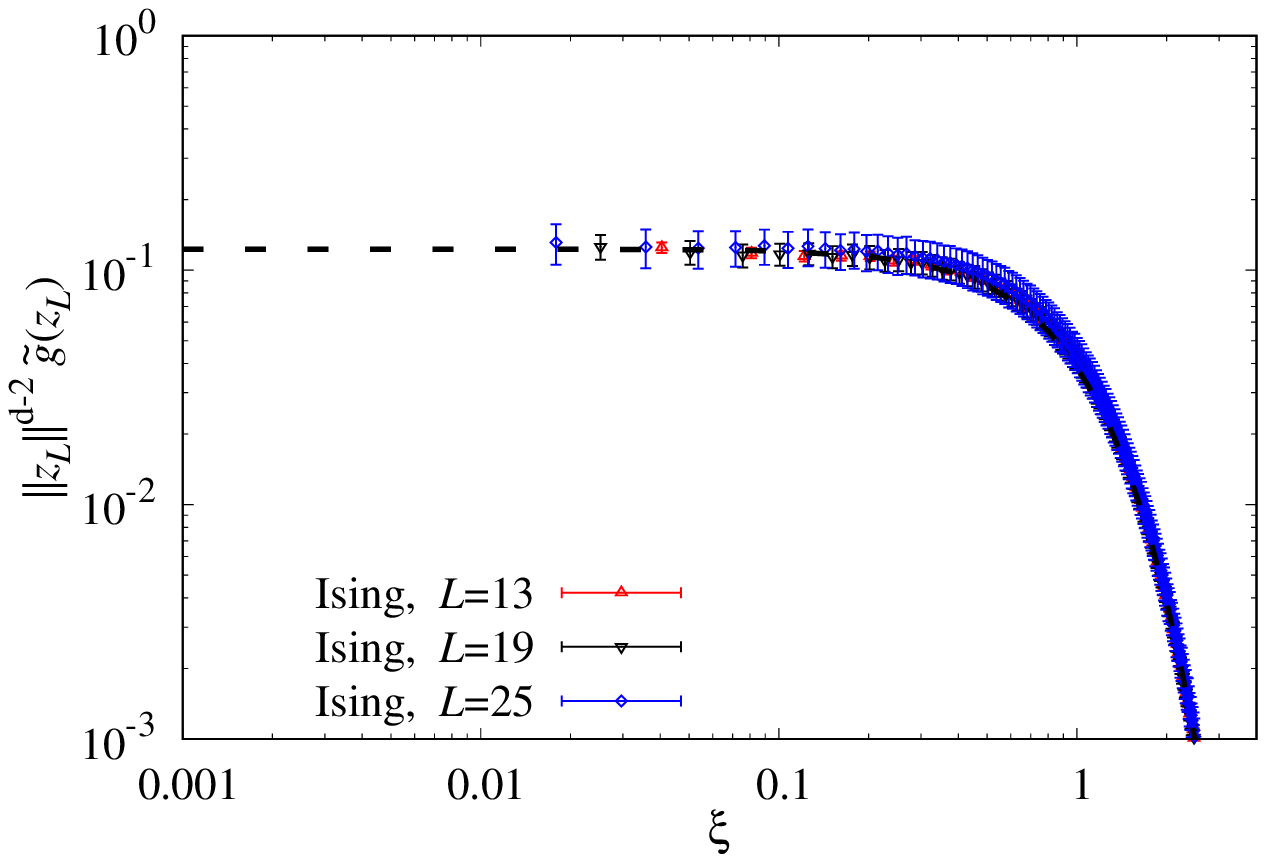}
    \caption{\label{subfig:unwrappedIsing}}
  \end{subfigure}
   \begin{subfigure}{0.425\textwidth}
    \centering
    \includegraphics[width=\textwidth]{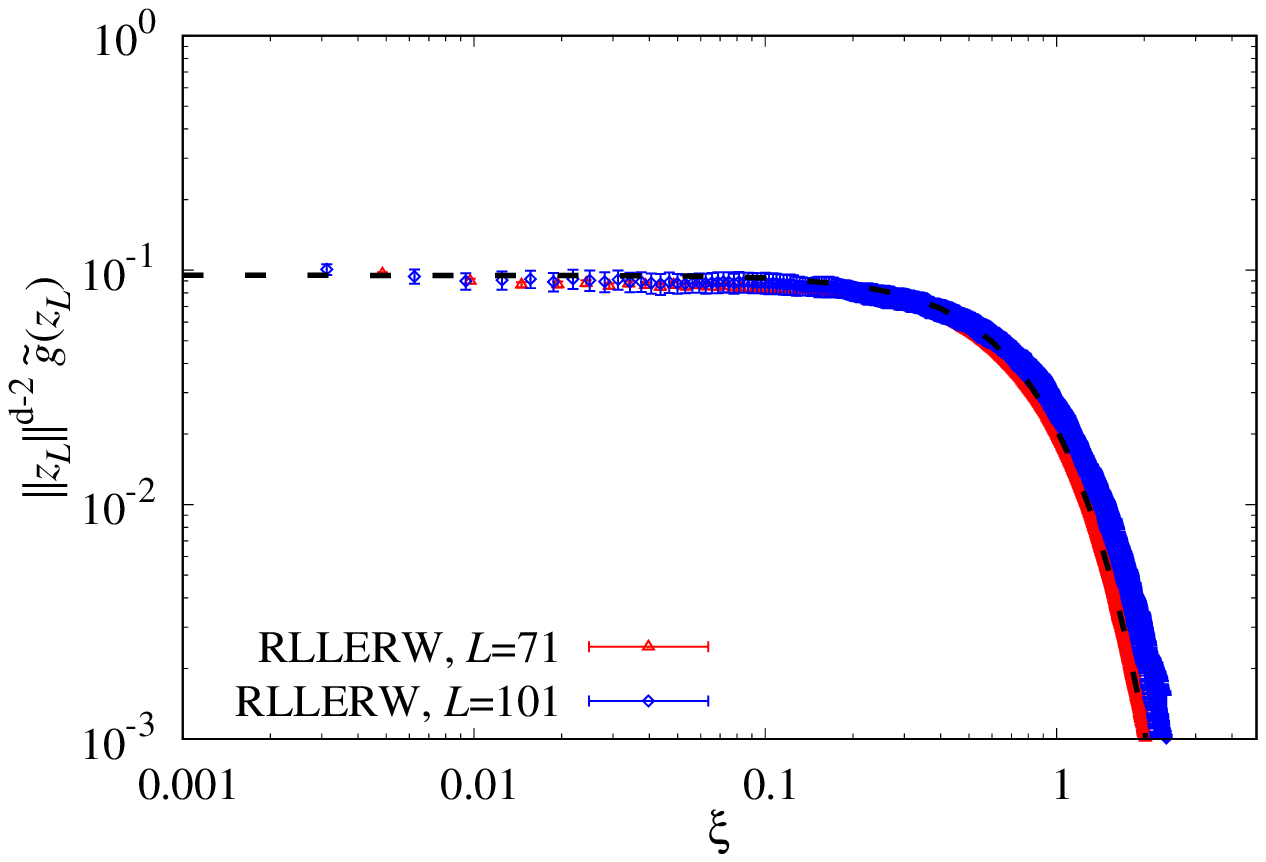}
    \caption{\label{subfig:unwrappedRLLERW}}
  \end{subfigure}
\caption{(\subref{subfig:unwrapped}) Unwrapped two-point functions on the five-dimensional torus, of the critical Ising and SAW models, and
  RLLERW whose walk length is drawn from the asymptotic walk length distribution of the complete-graph SAW.
  Standard SRW behaviour is clearly displayed in the bulk of the system.
  (\subref{subfig:unwrappedSAW})
  Plot of $\|z_L\|^{d-2}\tilde{g}(z_L)$ vs $\xi$ for SAW on five-dimensional tori. The dashed curve shows $H(\alpha,\beta,\gamma;\xi)$
  with constants $\alpha,\beta,\gamma$ set to the values described in the text, with $A_{\sS,d}$ and $B_{\sS,d}$ estimated via simulation.
  (\subref{subfig:unwrappedIsing}) Analogous plot to (\subref{subfig:unwrappedSAW}), for the Ising case.
  (\subref{subfig:unwrappedRLLERW}) Analogous plot to (\subref{subfig:unwrappedSAW}), for case of RLLERW whose walk length is drawn from the
  asymptotic walk length distribution of the complete-graph SAW. 
}
\label{fig:unwrapped_two-point_function}
\end{figure*}

\section{Proof of Proposition~\ref{prop:infinite_lattice}}
\label{sec:proofs}
Let $(S_n)_{n=0}^{\infty}$ be a simple random walk on $\ZZ^d$, starting from the origin, and let
\begin{equation}
p_n(z):=\mathbb{P}(S_n=z), \qquad z\in\ZZ^d.
\end{equation}
We say that $n\in\naturals$ and $z\in\ZZ^d$ have the same parity, and write $n\leftrightarrow
z$, iff $n + \|z\|_1$ is even. Clearly, $p_n(z) = 0$ if $n\nleftrightarrow z$. The main tool used to prove 
Proposition~\ref{prop:infinite_lattice} is the local central limit theorem for $(S_n)_{n=0}^{\infty}$, which allows $p_n(z)$ to be
approximated, when $n$ is large, by 
\begin{equation}
\bar{p}_n(z):= 2\left(\frac{d}{2\pi n}\right)^{d/2} \exp\left(-\frac{d\|z\|^2}{2n}\right), \qquad z\in\ZZ^d, \,n\ge1.
\label{Eq: pnbar(x)}
\end{equation}
In particular, we will apply the following lemma, whose proof we defer until the end of this section.

\begin{lemma}
  \label{lem:p and pbar sum bounds}
  Fix a positive integer $d$, and let $z_L\in\ZZ^d$ be a sequence for which $\|z_L\|\to\infty$ as $L\to\infty$.
  Then for any $\epsilon>0$, as $L\to \infty$
  \begin{enumerate}
  \item\label{lem_part:LCLT} $\sum\limits_{n=1}^\infty |p_n(z_L) - \bar{p}_{n}(z_L)| = O(\|z_L\|^{-d+\epsilon})$
  \item\label{lem_part:distance between consecutive pbar}
    $\sum\limits_{n=1}^\infty |\bar{p}_n(z_L) - \bar{p}_{n+1}(z_L)| = O(\|z_L\|^{-d+\epsilon})$
    \end{enumerate}
\end{lemma}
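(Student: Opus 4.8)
The plan is to prove both bounds by reducing each to the summation of explicit pointwise estimates, splitting the sum over $n$ at the diffusive scale where $n$ is of order $\|z_L\|^2$ --- the value at which $\bar p_n(z_L)$ attains its maximal order $\|z_L\|^{-d}$. The second bound is then a direct calculus estimate on the smooth function $t\mapsto\bar p_t(z_L)$, whereas the first additionally relies on a quantitative local central limit theorem for $(S_n)$, which I would quote (e.g.\ from \cite{LawlerLimic2010}) rather than reprove. A computational device I would use repeatedly is the elementary identity $\sum_{n\ge1}n^{-1}\bar p_n(z_L)=O(\|z_L\|^{-d})$, obtained by comparing the sum with $\int_0^\infty 2(d/(2\pi t))^{d/2}\e^{-d\|z_L\|^2/(2t)}\,t^{-1}\,\diff t$ and substituting $m=d\|z_L\|^2/(2t)$ to reduce it to a Gamma integral; the same substitution also gives $\|z_L\|^2\sum_{n\ge1}n^{-2}\bar p_n(z_L)=O(\|z_L\|^{-d})$.

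For the first bound, since $p_n(z_L)=0$ unless $n\leftrightarrow z_L$, it is enough to estimate the sum over those $n$ with $n\leftrightarrow z_L$, for which $\bar p_n(z_L)$ is the relevant Gaussian approximant. I would set $N:=\lceil\|z_L\|^2/\log^2\|z_L\|\rceil$. For $1\le n<N$ one has $\|z_L\|^2/n>\log^2\|z_L\|$, so $\bar p_n(z_L)\le C_d\,\e^{-d\|z_L\|^2/(2n)}$, while a Hoeffding bound applied coordinatewise gives $p_n(z_L)\le 2d\,\e^{-\|z_L\|^2/(2dn)}$; both are smaller than any fixed power of $\|z_L\|$, and since there are at most $\|z_L\|^2$ such terms their total contribution is $o(\|z_L\|^{-d})$. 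For $n\ge N$ one has $\|z_L\|\le\sqrt n\,\log\|z_L\|$, and in this regime the quantitative LCLT gives $|p_n(z_L)-\bar p_n(z_L)|\le\bar p_n(z_L)\,O(\log^{C}\|z_L\|/n)$ uniformly, the logarithmic factor absorbing the Edgeworth correction terms, which are governed by powers of $\|z_L\|/\sqrt n\le\log\|z_L\|$. Summing over $n\ge N$ and invoking the identity above gives $O(\log^{C}\|z_L\|\cdot\|z_L\|^{-d})=O(\|z_L\|^{-d+\epsilon})$, which together with the small-$n$ estimate proves the first bound.

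For the second bound, I would set $f(t):=2(d/(2\pi t))^{d/2}\e^{-d\|z_L\|^2/(2t)}$, so that $f'(t)=f(t)\,\frac{d}{2t}\bigl(\frac{\|z_L\|^2}{t}-1\bigr)$, and apply the mean value theorem: $|\bar p_n(z_L)-\bar p_{n+1}(z_L)|\le\sup_{t\in[n,n+1]}|f'(t)|\le C\,\bar p_n(z_L)\bigl(\tfrac1n+\tfrac{\|z_L\|^2}{n^2}\bigr)$ for all $n\ge1$. Here replacing $\sup_{t\in[n,n+1]}f$ by a constant multiple of $f(n)$ is legitimate because $f$ is unimodal with peak at $t=\|z_L\|^2$, the $O(1)$ indices near the peak and the super-exponentially small indices with $n\ll\|z_L\|^2$ being handled separately. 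Summing and using the two identities recorded above then yields $\sum_{n\ge1}|\bar p_n(z_L)-\bar p_{n+1}(z_L)|=O(\|z_L\|^{-d})$, slightly stronger than claimed.

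The step I expect to be the main obstacle is the quantitative input to the first bound: one must invoke a version of the LCLT whose error term stays usable throughout the \emph{moderate-deviation window} $\sqrt n\ll\|z_L\|\le\sqrt n\,\log\|z_L\|$, where the Gaussian no longer approximates $p_n(z_L)$ pointwise but both quantities are still super-polynomially small; the logarithmic slack incurred there is exactly what forces the factor $\|z_L\|^{\epsilon}$ in the statement. Indeed, were the sharper uniform bound $|p_n(z_L)-\bar p_n(z_L)|=O\bigl(n^{-(d+2)/2}\e^{-c\|z_L\|^2/n}\bigr)$ available for $n\leftrightarrow z_L$, the same summation would give the result with $\epsilon=0$. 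Everything else --- the Hoeffding estimate, the two Gamma-integral identities, and the mean value bound --- is routine.
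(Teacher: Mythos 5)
Your proposal is correct in outline and proves the right bounds, but it runs on different inputs than the paper's proof, so a comparison is worthwhile. Both arguments share the same skeleton: split the sum over $n$ at a scale below the diffusive scale $\|z_L\|^2$, kill the head by concentration, and kill the tail by a local CLT. The paper, however, splits at $a=\lceil\|z_L\|\rceil^{2-2\epsilon/d}$ and uses only the \emph{uniform additive} LCLT bound $|p_n(z)-\bar p_n(z)|\le c\,n^{-d/2-1}$ (and the analogous bound for $|\bar p_n-\bar p_{n+1}|$, quoted from a companion paper), so the tail contributes $O(a^{-d/2})=O(\|z_L\|^{-d+\epsilon})$ and the head is controlled by a maximal inequality for $\|S_j\|$ rather than a coordinatewise Hoeffding bound; in that argument the $\epsilon$ is forced by the absence of a Gaussian factor in the uniform LCLT error, exactly as you diagnose in your final paragraph. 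You instead split at $\|z_L\|^2/\log^2\|z_L\|$ and invoke a \emph{relative-error} LCLT valid in the moderate-deviation window, together with the Gamma-integral identities $\sum_n n^{-1}\bar p_n(z)=O(\|z\|^{-d})$ and $\|z\|^2\sum_n n^{-2}\bar p_n(z)=O(\|z\|^{-d})$; this buys a sharper conclusion ($O(\|z_L\|^{-d}\log^{C}\|z_L\|)$ for part (i) and $O(\|z_L\|^{-d})$ for part (ii)) at the price of a heavier citation -- for simple random walk you must use the parity-adjusted (bipartite) form of the relative-error LCLT, e.g.\ the analogue of \cite[Theorem 2.3.11]{LawlerLimic2010}, and check it applies uniformly for $\|z_L\|\le\sqrt{n}\log\|z_L\|$. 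Two small points to tidy: in part (ii), the cleanest way to avoid the issue you flag with $\sup_{[n,n+1]}f$ is to write $|\bar p_{n+1}(z)-\bar p_n(z)|\le\int_n^{n+1}|f'(t)|\,\diff t$ and sum the integrals directly, which gives $\int_1^\infty f(t)\frac{d}{2t}\bigl(\frac{\|z\|^2}{t}+1\bigr)\diff t=O(\|z\|^{-d})$ with no case analysis; and in the sum-versus-integral comparisons the integrands are unimodal rather than monotone, so you should note that the discrepancy is bounded by the maximum of the summand, which is $O(\|z\|^{-d-2})$ and hence harmless.
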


\begin{proof}[Proof of Proposition~\ref{prop:infinite_lattice}]
  Let $\sN$ be an $\naturals$-valued random variable, independent of $(S_n)_{n=0}^{\infty}$. It follows from the
  definition~\eqref{def:greens_function_rlrw} that for all $z\in\ZZ^d$
  $$
  g(z) = \EE\, \sum_{n=0}^{\infty} \ind(\sN\ge n)\ind(S_n=z) = \sum_{n=0}^{\infty}\, \PP(\sN\ge n)\, p_n(z).
  $$
  Moreover, if $z\neq0$ we have
  \begin{align*}
    g(z) &= \sum_{n=1}^{\infty}\, \PP(\sN\ge n)\, p_n(z) \,\ind(z \leftrightarrow n) \\
    &= D(z) + E_1(z) + E_2(z)
  \end{align*}
  where
  \begin{align}
    D(z) &:= \sum_{n=1}^{\infty} \frac{\bar{p}_n(z)}{2} \,\PP(\sN\ge n), \label{eq:D definition}\\
    E_1(z) &:=
    \sum_{n=1}^{\infty} \frac{\bar{p}_n(z)}{2} \,\PP(\sN\ge n)\ind(z\leftrightarrow n)
  - \sum_{n=1}^{\infty} \frac{\bar{p}_n(z)}{2} \,\PP(\sN\ge n)\ind(z \not\leftrightarrow n),
    \label{eq:E1 definition}\\
    E_2(z) &= \sum_{n=1}^{\infty}\, \PP(\sN\ge n)\, \left[p_n(z)-\bar{p}_n(z)\right] \,\ind(z \leftrightarrow n)\label{eq:E2 definition}.
  \end{align}
  We consider each of these three terms in turn, beginning with $D$.
  If $a:(0,\infty)\to(0,\infty)$ is non-increasing and $b:(0,\infty)\to(0,\infty)$ is non-decreasing, then for any positive integer $k$ one has
  \begin{equation}
    \label{eq:bounding sums via integrals}
    \int_{k-1}^{\infty} a(t+1)b(t)\diff t \le \sum_{n=k}^{\infty} a(n)b(n) \le \int_{k}^{\infty}a(t-1)b(t)\diff t.
  \end{equation}
  Applying~\eqref{eq:bounding sums via integrals} with $a(n)=n^{-d/2}\,\PP(\sN> n-1)$ and $b(n)=e^{-d\|z\|^2/2n}$, and changing
  integration variables, yields 
  \begin{multline}
    \label{eq:pre-Lebesgue bound on D}
    \int_{0}^{\infty}s^{d/2-2}e^{-s}\left(1+\frac{2s}{d\|z\|^2}\right)^{-d/2}\PP\left(\frac{2\sN}{d\|z\|^2}>s^{-1}\right) 
    \diff s \\
    \le \frac{2\pi^{d/2}}{d} \|z\|^{d-2} D(z) \le \\ \frac{\pi^{d/2}}{d} \|z\|^{d-2} \bar{p}_1(z)
+\int_{0}^{d\|z\|^2/4}s^{d/2-2}e^{-s}\left(1-\frac{2s}{d\|z\|^2}\right)^{-d/2}
\PP\left(\frac{2\sN}{d\|z\|^2}>s^{-1}-\frac{4}{d \|z\|^2}\right)
\diff s
  \end{multline}
  In the upper bound, the $\bar{p}_1(z)$ term is treated separately since $a(t-1)b(t)$ is not integrable on $(1,\infty)$.
  
  Now consider sequences $\sN_L$, $a_L$ and $z_L$ as described in the statement of the proposition, and
  substitute $\sN=\sN_L$ and $z=z_L$ in~\eqref{eq:pre-Lebesgue bound on D}.
  Since $a_L$ is positive and non-decreasing, it either converges to a strictly positive limit, or diverges to $+\infty$.
  Consequently, since $\PP(\sN_L/a_L \le \cdot)$ converges weakly to $G$ as $L\to\infty$, standard convergence of types arguments (see
  e.g.~\cite[pp. 193]{Billingsley94}), imply that, for any fixed $c\in\RR$ and almost every $y\in\RR$, as $L\to\infty$ we have
  \begin{equation}
    \label{eq:weak convergence}
  \lim_{L\to\infty} \PP\left(\frac{\sN_L}{a_L} \le \frac{d\|z_L\|^2}{2a_L}\,y -\frac{c}{a_L}\right) = G\left(\frac{d}{2}\,\xi^2\, y\right).
  \end{equation}
  Then, since $s^{d/2-2}e^{-s}$ is integrable on $(0,\infty)$ when $d\ge3$, applying Lebesgue's dominated convergence theorem to the
  integrals in the lower and upper bounds in~\eqref{eq:pre-Lebesgue bound on D} shows, in both cases, that the limits as $L\to\infty$ exist and equal
  $$
  \int_{0}^{\infty}s^{d/2 - 2}e^{-s} [1-G(d\xi^2/2s)]\diff s.
  $$
  It then follows from~\eqref{eq:pre-Lebesgue bound on D} that
\begin{equation}
 \lim_{L\to\infty} \|z_L\|^{d-2}D(z_L) =
 \frac{d}{2\pi^{d/2}} \int_{0}^{\infty}s^{d/2 - 2}e^{-s} [1-G(d\xi^2/2s)]\diff s.
 \label{eq:D limit}
 \end{equation}

 We now consider $E_1$. Let $z\in\ZZ^d$ and $n\in\posint$. Since $\ind(z\not\leftrightarrow n)=\ind(z\leftrightarrow n+ 1)$,
 changing variables via $n\mapsto n+1$ in the first sum in~\eqref{eq:E1 definition} yields
 $$
   \label{eq:E1 sums}
   2E_1(z) \le \bar{p}_1(z) + \sum_{n=1}^\infty \left|\bar{p}_{n+1}(z) - \bar{p}_n(z)\right|,
   $$
 while changing variables the second sum yields
 $$
   2E_1(z) \ge -\left(\bar{p}_1(z) + \sum_{n=1}^\infty \left|\bar{p}_{n+1}(z) - \bar{p}_n(z)\right|\right).
 $$
 It then follows from Lemma~\ref{lem:p and pbar sum bounds} that $E_1(z_L)=O(\|z_L\|^{-d+\epsilon})$ as $L\to\infty$, for every
 $\epsilon>0$, and so
\begin{equation}
  \label{eq:E1 limit}
 \lim_{L\to\infty} \|z_L\|^{d-2}\,E_1(z_L) = 0.
\end{equation}

 Finally, now consider $E_2$. In this case, Lemma~\ref{lem:p and pbar sum bounds} immediately implies that
 $E_2(z_L)=O(\|z_L\|^{-d+\epsilon})$ as $L\to\infty$, for every $\epsilon>0$, and so
\begin{equation}
  \label{eq:E2 limit}
 \lim_{L\to\infty} \|z_L\|^{d-2}\,E_2(z_L) = 0.
\end{equation}
The stated result follows by combining~\eqref{eq:D limit}, \eqref{eq:E1 limit} and~\eqref{eq:E2 limit}.
\end{proof}

We now turn to the proof of Lemma~\ref{lem:p and pbar sum bounds}. 

 \begin{proof}[Proof of Lemma~\ref{lem:p and pbar sum bounds}]
   The local central limit theorem for random walk (see e.g.\cite[Theorem 1.2.1]{Lawler1991}) implies that there exists
   $c_1\in(0,\infty)$ such that for all $n\in\posint$ and $z\in\ZZ^d$ we have 
   \begin{equation}
     |\bar{p}_n(z) - p_n(z)| \le c_1 n^{-d/2-1}.
     \label{eq:LCLT}
   \end{equation}
   Similarly, it can be shown (see e.g.\cite[Lemma 6.1]{ZhouGrimmDengGaroni2020}) that there exists $c_2\in(0,\infty)$ such that
   for all $n\in\posint$ and $z\in\ZZ^d$ we have
   \begin{equation}
     |\bar{p}_n(z) - \bar{p}_{n+1}(z)| \le c_2 n^{-d/2-1}.
     \label{eq:bound on difference of consecutive pnbars}
   \end{equation}
   Let $a\in\posint$. It follows from~\eqref{eq:LCLT}, via~\eqref{eq:bounding sums via integrals}, that
   \begin{equation}
     \sum_{n=a+1}^\infty|\bar{p}_n(z) - p_n(z)| \le  c_1 \int_{a+1}^\infty (t-1)^{-d/2-1} \dt = \frac{2}{d}c_1 a^{-d/2}. 
     \label{eq:upper sum for pn vs pnbar}
   \end{equation}
   Similarly, it follows from~\eqref{eq:bound on difference of consecutive pnbars} that
   \begin{equation}
     \sum_{n=a+1}^\infty|\bar{p}_n(z) - \bar{p}_{n+1}(z)| \le \frac{2}{d}c_2 a^{-d/2}.
     \label{eq:upper sum for pnbars}
   \end{equation}

   Now suppose $1\le n \le a$. From~\eqref{Eq: pnbar(x)} there exists $c_3\in(0,\infty)$ such that 
   \begin{equation}
     \bar{p}_n(z), \bar{p}_{n+1}(z) \le c_3 \exp\left(-\dfrac{d\|z\|^2}{2(a+1)}\right)
     \label{eq:exponential bounds for pnbars}
   \end{equation}
   But, as shown e.g. in~\cite[Proposition 2.1.2]{LawlerLimic2010}, there exist $\beta,c_4\in(0,\infty)$ such that for all $n\in\naturals$
   and $s>0$
   $$
   \PP\left(\max_{0\le j \le n} \|S_j\| \ge s\sqrt{n}\right) \le c_4 e^{-\beta s^2}.
   $$
   It then follows that for all $1\le n \le a$ and $z\in\ZZ^d$ we have
   \begin{equation}
     p_n(z) \le c_4 \exp\left(-\frac{\beta\|z\|^2}{a+1}\right).
       \label{eq:exponential bounds for pn}
   \end{equation}
   From~\eqref{eq:exponential bounds for pnbars} and~\eqref{eq:exponential bounds for pn} we then conclude that there exist
   $c_5,\gamma\in(0,\infty)$, independent of $a$, such that for any $a\in\posint$ we have
   \begin{equation}
     \sum_{n=1}^a|\bar{p}_n(z)-\bar{p}_{n+1}(z)|, \quad \sum_{n=1}^a|\bar{p}_n(z)-p_{n}(z)|
     \le
     c_5\, a\, \exp\left(-\gamma \frac{\|z\|^2}{(a+1)}\right).
       \label{eq:lower sums for pn and pnbar bounds}
   \end{equation}
   
   Now fix $\epsilon\in(0,\infty)$ and let $z\neq 0$. Choosing $a=\lceil\|z\|\rceil^{2-\frac{2\epsilon}{d}}$ implies that
   the sums in ~\eqref{eq:lower sums for pn and pnbar bounds} are exponentially small, and combining with 
   \eqref{eq:upper sum for pn vs pnbar} and~\eqref{eq:upper sum for pnbars} then implies that for any $\epsilon\in(0,\infty)$ there exists
   $c\in(0,\infty)$ such that 
   \begin{equation}
     \sum_{n=1}^\infty|\bar{p}_n(z)-\bar{p}_{n+1}(z)|, \quad      \sum_{n=1}^\infty|\bar{p}_n(z)-\bar{p}_{n+1}(z)|
     \le c \|z\|^{-d+\epsilon}.
   \end{equation}
   Both parts of the stated result now follow by specialising to the case $z=z_L$.
  \end{proof}

 \ack{
   This research was supported by the Australian Research Council Centre of Excellence for Mathematical and Statistical Frontiers (Project
   no. CE140100049), and the Australian Research Council's Discovery Projects funding scheme (Project No. DP180100613). 
   It was undertaken with the assistance of resources and services from the National Computational Infrastructure (NCI), which is supported
   by the Australian Government. 
   Y. D. acknowledges the support by the National Key R\&D Program of China under Grant No. 2018YFA0306501 and by the National 
   Natural Science Foundation of China under Grant No. 11625522.
}

 \appendix

 \section{Appendix}
 \label{sec:Appendix A}
 \subsection{Random-length Random Walk and Random-length LERW}
 In this brief appendix we provide some details outlining how~\eqref{eq:RLRW weight}, \eqref{eq:RLLERW weight}
 and~\eqref{eq:LERW unwrapped two-point function} can be obtained.

 We begin by considering RLRW on $\ZZ^d$. Therefore, let $\rho$ be given by~\eqref{eq:RLRW weight} and, let $z\in\ZZ^d$. Then
    \begin{align*}
      \EE \sum_{n=0}^{|\sZ|} \ind(\sZ_n=z) &=\EE \sum_{n=0}^{\infty} \ind(|\sZ|\ge n) \ind(\sZ_n=z)\\
      &= \EE \sum_{n=0}^{\infty} \ind(|\sN|\ge n) \ind(S_n=z)\\
      &= \sum_{n=0}^{\infty} \PP(|\sN|\ge n) \,\PP(S_n=z)\\
      &= \sum_{n=0}^{\infty} \sum_{\substack{\omega\in\Omega_{\ZZ^d}^n \\\omega:0\to z}} \frac{\PP(\sN\ge |\omega|)}{(2d)^{|\omega|}} \\
      &= \sum_{\substack{\omega\in\Omega_{\ZZ^d}\\\omega:0\to z}}\rho(\omega)\\
  \end{align*}
 which confirms that the two-point function~\eqref{def:greens_function_rlrw} is indeed of the form~\eqref{eq:Ising correlation via walk}
 with weight~\eqref{eq:RLRW weight}. Precisely the same argument confirms the analogous statement for RLRW on the torus. 

 We now turn our attention to, $\sL$, the RLLERW on the torus. Let $\Sigma_{\torus}$ denote the subset of $\Omega_{\torus}$ consisting of
 self-avoiding walks. 
 Let $x\in\torus$. Since $\sL$ is self-avoiding, we have
 $$
   \EE \sum_{n=0}^{|\sL|}\ind(\sL_n=x)
   = \sum_{\tau\in\Sigma_{\torus}}\PP(\sL=\tau) \sum_{n=0}^{|\tau|}\ind(\tau_n=x)
   = \sum_{\tau\in\Sigma_{\torus}}\PP(\sL=\tau) \ind(\tau\ni x)
 $$
 But it can be easily shown that for any map $f:\Sigma_{\torus}\to\RR$, we have for all $x\in\torus$ that 
 \begin{equation}
   \label{eq:SAW sum decomposition}
 \sum_{\substack{\tau\in\Sigma_{\torus}\\ \tau \ni x}}f(\tau)=
 \sum_{\substack{\eta\in\Sigma_{\torus}\\ e(\eta)=x}}\,
 \sum_{\substack{\tau\in\Sigma_{\torus}\\ \tau \sqsupseteq\eta}}f(\tau)
 \end{equation}
 It then follows, in particular, that
 \begin{align*}
   \EE \sum_{n=0}^{|\sL|}\ind(\sL_n=x)
   &=  \sum_{\substack{\eta\in\Sigma_{\torus}\\ e(\eta)=x}}  \sum_{\substack{\tau\in\Sigma_{\torus}\\ \tau \sqsupseteq\eta}}\PP(\sL=\tau)\\
   &=  \sum_{\substack{\eta\in\Sigma_{\torus}\\ e(\eta)=x}} \PP(\sL \sqsupseteq\eta) \\
   &= \sum_{\substack{\eta\in\Omega_{\torus}\\ \eta:0\to x}} \rho(\eta)
 \end{align*}
 with $\rho$ given by~\eqref{eq:RLLERW weight}. 
 We conclude that the RLLERW two-point function is indeed of the form~\eqref{eq:Ising correlation via walk} with $\rho$ as
 in~\eqref{eq:RLLERW weight}.

 Finally, we now consider~\eqref{eq:LERW unwrapped two-point function}. Again let $\rho$ be given by~\eqref{eq:RLLERW weight}, and 
 let $z\in\ZZ^d$. Since $\sL$ is self-avoiding we have
 \begin{align*}
   \tilde{g}_{\rho}(z) &= \sum_{\eta\in\Omega_{\torus}} \ind[e\circ\sW^{-1}(\eta)=z]\, \PP(\sL\sqsupseteq\eta) \\
   &=\sum_{\substack{\eta\in\Sigma_{\torus}\\e\circ\sW^{-1}(\eta)=z}}
   \sum_{\substack{\tau\in\Sigma_{\torus}\\\tau\sqsupseteq\eta}}\PP(\sL=\tau) \\
 \end{align*}
 But since $\sW^{-1}(\eta)$ is self-avoiding whenever $\eta$ is, a slight variation of the argument leading to~\eqref{eq:SAW sum
   decomposition} shows that for any map $f:\Sigma_{\torus}\to\RR$ and $z\in\ZZ^d$ 
 \begin{equation}
   \label{eq:SAW sum unwrapped decomposition}
 \sum_{\substack{\eta\in\Sigma_{\torus}\\ e\circ\sW^{-1}(\eta)=z}}\,
 \sum_{\substack{\tau\in\Sigma_{\torus}\\ \tau \sqsupseteq\eta}}f(\tau)=
  \sum_{\substack{\tau\in\Sigma_{\torus}\\ \sW^{-1}(\tau) \ni z}}f(\tau)
 \end{equation}
 It then follows that
 \begin{align*}
   \tilde{g}_{\rho}(z) &= \sum_{\substack{\tau\in\Sigma_{\torus}\\\sW^{-1}(\tau)\ni z}} \PP(\sL=\tau) = \PP[\sW^{-1}(\sL)\ni z],
 \end{align*}
 as claimed in~\eqref{eq:LERW unwrapped two-point function}.
 
\section*{References}
\bibliographystyle{unsrt}

\begin{thebibliography}{10}

\bibitem{FernandezFrohlichSokal1992}
{Fernandez, R. and Fr{\"o}hlich, J. and Sokal, A.D.}
\newblock {\em Random Walks, Critical Phenomena, and Triviality in Quantum
  Field Theory}.
\newblock Springer, Berlin, 1992.

\bibitem{LundowMarkstrom2014}
{P. H. Lundow and K. Markstr\"om}.
\newblock {Finite size scaling of the 5D Ising model with free boundary
  conditions}.
\newblock {\em {Nuclear Physics B}}, 889:249, 2014.

\bibitem{WittmannYoung2014}
{M. Wittmann and A. P. Young}.
\newblock {Finite-size scaling above the upper critical dimension}.
\newblock {\em {Physical Review E}}, 90:062137, 2014.

\bibitem{LundowMarkstrom2016}
{P. H. Lundow and K. Markstr\"om}.
\newblock {The scaling window of the 5D Ising model with free boundary
  conditions}.
\newblock {\em {Nuclear Physics B}}, 911:163, 2016.

\bibitem{FloresSola2016}
Emilio Flores-Sola, Bertrand Berche, Ralph Kenna, and Martin Weigel.
\newblock Role of fourier modes in finite-size scaling above the upper critical
  dimension.
\newblock {\em Physical review letters}, 116(11):115701, 2016.

\bibitem{GrimmElciZhouGaroniDeng2017}
{J. Grimm, E. El\c{c}i, Z. Zhou, T. M. Garoni and Y. Deng}.
\newblock {Geometric Explanation of Anomalous Finite-Size Scaling in High
  Dimensions}.
\newblock {\em {Physical Review Letters}}, 118:115701, 2017.

\bibitem{ZhouGrimmFangDengGaroni2018}
{Z. Zhou, J. Grimm, S. Fang, Y. Deng, and T. M. Garoni}.
\newblock {Random-Length Random Walks and Finite-Size Scaling in High
  Dimensions}.
\newblock {\em {Physical Review Letters}}, 121:185701, 2018.

\bibitem{CamiaJiangNewman2021}
Federico Camia, Jianping Jiang, and Charles~M Newman.
\newblock {The effect of free boundary conditions on the Ising model in high
  dimensions}.
\newblock {\em Probability Theory and Related Fields}, 181:311--328, 2021.

\bibitem{Binder1985}
{K. Binder}.
\newblock {Critical properties and finite-size effects of the five-dimensional
  Ising model}.
\newblock {\em {Zeitschrift f\"ur Physik B}}, 61:13--23, 1985.

\bibitem{Papathanakos2006}
{V. Papathanakos}.
\newblock {\em {Finite-Size Effects in High-Dimensional Statistical Mechanical
  Systems: The Ising Model With Periodic Boundary Conditions}}.
\newblock PhD thesis, {Princeton University}, {Princeton, New Jersey}, 2006.

\bibitem{SladeWSAW2020}
Gordon Slade.
\newblock The near-critical two-point function for weakly self-avoiding walk in
  high dimensions.
\newblock arXiv:2008.00080v2, 2020.

\bibitem{HutchcroftMichtaSlade2021}
Tom Hutchcroft, Emmanuel Michta, and Gordon Slade.
\newblock {High-dimensional near-critical percolation and the torus plateau}.
\newblock arXiv:2107.12971v1, 2021.

\bibitem{HeydenreichHofstad2017}
M.~Heydenreich and R.~van~der Hofstad.
\newblock {\em {Progress in High-Dimensional Percolation and Random Graphs}}.
\newblock CRM Short Courses. Springer, 2017.

\bibitem{Aizenman1986}
Michael Aizenman.
\newblock Rigorous studies of critical behavior.
\newblock {\em Physica A: Statistical Mechanics and its Applications},
  140(1):225 -- 231, 1986.

\bibitem{Sakai2007}
Akira Sakai.
\newblock Lace expansion for the ising model.
\newblock {\em Communications in Mathematical Physics}, 272:283--344, 2007.

\bibitem{Hara2008}
Takashi Hara.
\newblock Decay of correlations in nearest-neighbor self-avoiding walk,
  percolation, lattice trees and animals.
\newblock {\em The Annals of Probability}, 36:530--593, 2008.

\bibitem{LawlerLimic2010}
G.F. Lawler and V.~Limic.
\newblock {\em Random Walk: A Modern Introduction}.
\newblock Cambridge Studies in Advanced Mathematics. Cambridge University
  Press, 2010.

\bibitem{MichtaSladeLGF2021}
Emmanuel Michta and Gordon Slade.
\newblock {Asymptotic behaviour of the lattice Green function}.
\newblock arXiv:2101.04717v3, 2021.

\bibitem{Yadin2016}
Ariel Yadin.
\newblock Self-avoiding walks on finite graphs of large girth.
\newblock {\em Latin American Journal of Probability and Mathematical
  Statistics}, 13:521--544, 2016.

\bibitem{Slade2021}
Gordon Slade.
\newblock {Self-avoiding walk on the hypercube}.
\newblock arXiv:2108.03682v1, 2021.

\bibitem{MichtaSlade2021}
Emmanuel Michta and Gordon Slade.
\newblock {Weakly self-avoiding walk on a high-dimensional torus}.
\newblock arXiv:2107.14170v1, 2021.

\bibitem{DengGaroniGrimmNasrawiZhou2019}
Youjin Deng, Timothy~M Garoni, Jens Grimm, Abrahim Nasrawi, and Zongzheng Zhou.
\newblock The length of self-avoiding walks on the complete graph.
\newblock {\em Journal of Statistical Mechanics: Theory and Experiment},
  2019(10):103206, oct 2019.

\bibitem{SladeKn2020}
Gordon Slade.
\newblock Self-avoiding walk on the complete graph.
\newblock {\em Journal of the Mathematical Society of Japan}, 72:1189--1200,
  2020.

\bibitem{Aizenman1982}
M.~Aizenman.
\newblock Geometric analysis of $\phi^4$ fields and ising models. parts i and
  ii.
\newblock {\em Commun. Math. Phys.}, 86(1):1--48, 1982.

\bibitem{Aizenman1985}
M.~Aizenman.
\newblock Rigorous studies of critical-behavior.
\newblock {\em Lecture notes in Physics}, 216:125--139, 1985.

\bibitem{CollevecchioGaroniHyndmanTokarev16}
A.~Collevecchio, T.M. Garoni, T.~Hyndman, and D.~Tokarev.
\newblock The worm process for the ising model is rapidly mixing.
\newblock {\em J. Stat. Phys.}, 164:1082--1102, 2016.

\bibitem{MadrasSlade1996}
Neal Madras and Gordon Slade.
\newblock {\em {The Self-Avoiding Walk}}.
\newblock {Birkh\"auser}, Boston, 1996.

\bibitem{HuChenDeng2017}
H.~Hu, X.~Chen, and Y.~Deng.
\newblock Irreversible markov chain monte carlo algorithm for self-avoiding
  walk.
\newblock {\em Front. Phys.}, 12:120503, 2017.

\bibitem{BerrettiSokal1985}
A.~Berretti and A.D. Sokal.
\newblock New monte carlo method for the self-avoiding walk.
\newblock {\em Journal of Statistical Physics}, 40:483–531, 1985.

\bibitem{ProkofievSvistunov2001}
Nikolay Prokof'ev and Boris Svistunov.
\newblock Worm algorithms for classical statistical models.
\newblock {\em Phys. Rev. Lett.}, 87:160601, Sep 2001.

\bibitem{Baxter2016}
R.J. Baxter.
\newblock {\em Exactly Solved Models in Statistical Mechanics}.
\newblock Elsevier Science, 2016.

\bibitem{Jensen2003}
Iwan Jensen.
\newblock A parallel algorithm for the enumeration of self-avoiding polygons on
  the square lattice.
\newblock {\em Journal of Physics A: Mathematical and General},
  36(21):5731--5745, may 2003.

\bibitem{Owczarek2001}
A~L Owczarek and T~Prellberg.
\newblock Scaling of self-avoiding walks in high dimensions.
\newblock {\em Journal of Physics A: Mathematical and General},
  34(29):5773--5780, jul 2001.

\bibitem{Young2015}
P.~Young.
\newblock {\em Everything You Wanted to Know About Data Analysis and Fitting
  but Were Afraid to Ask}.
\newblock SpringerBriefs. Springer International Publishing, 2015.

\bibitem{Sokal1996}
Alan~D. Sokal.
\newblock Monte carlo methods in statistical mechanics: Foundations and new
  algorithms note to the reader.
\newblock 1996.

\bibitem{DengGaroniSokal2007}
Youjin Deng, Timothy~M. Garoni, and Alan~D. Sokal.
\newblock Dynamic critical behavior of the worm algorithm for the ising model.
\newblock {\em Phys. Rev. Lett.}, 99:110601, Sep 2007.

\bibitem{Billingsley94}
Patrick Billingsley.
\newblock {\em {Probability and Measure}}.
\newblock Wiley, New York, 3 edition, 1994.

\bibitem{HaraSladeRMP1992}
Takashi Hara and Gordon Slade.
\newblock {The lace expansion for self-avoiding walk in five or more
  dimensions}.
\newblock {\em Reviews in mathematical physics}, 4:235--327, 1992.

\bibitem{Lawler1991}
G.F. Lawler.
\newblock {\em Intersections of Random Walks}.
\newblock Probability and Its Applications. Birkh{\"a}user Boston, 2013.

\bibitem{ZhouGrimmDengGaroni2020}
{Zongzheng Zhou, Jens Grimm, Youjin Deng and Timothy M. Garoni}.
\newblock {Random-length Random Walks and Finite-size Scaling on
  high-dimensional hypercubic lattices I: Periodic Boundary Conditions}.
\newblock {In preparation}, 2019.

\end{thebibliography}

\end{document}